\tikzstyle{env}=[copoint,regular polygon rotate=0,minimum width=0.2cm, fill=black]
\tikzstyle{every picture}=[baseline=-0.25em]
\tikzstyle{dotpic}=[scale=0.5]
\tikzstyle{diredges}=[every to/.style={diredge}]
\tikzstyle{dot graph}=[shorten <=-0.1mm,shorten >=-0.1mm,scale=0.6]
\tikzstyle{plot point}=[circle,fill=black,minimum width=2mm,inner sep=0]
\tikzstyle{braceedge}=[decorate,decoration={brace,amplitude=2mm,raise=-1mm}]
\tikzstyle{small braceedge}=[decorate,decoration={brace,amplitude=1mm,raise=-1mm}]
\tikzstyle{left hook arrow}=[left hook-latex]
\tikzstyle{right hook arrow}=[right hook-latex]
\tikzstyle{black dot}=[inner sep=0.7mm,minimum width=0pt,minimum height=0pt,fill=black,draw=black,shape=circle]
\tikzstyle{dot}=[black dot]
\tikzstyle{smalldot}=[inner sep=0.4mm,minimum width=0pt,minimum height=0pt,fill=black,draw=black,shape=circle]
\tikzstyle{white dot}=[dot,fill=white]
\tikzstyle{antipode}=[white dot,inner sep=0.3mm,font=\footnotesize]
\tikzstyle{smallwhitedot}=[smalldot,fill=white]
\tikzstyle{alt white dot}=[white dot,label={[xshift=3.07mm,yshift=-0.05mm,font=\footnotesize]left:$*$}]
\tikzstyle{gray dot}=[dot,fill=gray!40!white]
\tikzstyle{smallgraydot}=[smalldot,fill=gray!40!white]
\tikzstyle{box vertex}=[draw=black,rectangle]
\tikzstyle{small box}=[box vertex,fill=white]
\tikzstyle{whitebg}=[fill=white,inner sep=2pt]
\tikzstyle{graph state vertex}=[sg vertex,fill=black]
\tikzstyle{wide copoint}=[fill=white,draw=black,shape=isosceles triangle,shape border rotate=90,isosceles triangle stretches=true,inner sep=1pt,minimum width=1.5cm,minimum height=5mm]
\tikzstyle{wide point}=[fill=white,draw=black,shape=isosceles triangle,shape border rotate=-90,isosceles triangle stretches=true,inner sep=1pt,minimum width=1.5cm,minimum height=4mm]
\tikzstyle{very wide copoint}=[fill=white,draw=black,shape=isosceles triangle,shape border rotate=-90,isosceles triangle stretches=true,inner sep=1pt,minimum width=2.5cm,minimum height=4mm]
\tikzstyle{very wide empty copoint}=[draw=black,shape=isosceles triangle,shape border rotate=-90,isosceles triangle stretches=true,inner sep=1pt,minimum width=2.5cm,minimum height=4mm]
\tikzstyle{symm}=[ultra thick,shorten <=-1mm,shorten >=-1mm]
\tikzstyle{square box}=[rectangle,fill=white,draw=black,minimum height=5mm,minimum width=5mm,font=\small]
\tikzstyle{square gray box}=[rectangle,fill=gray!30,draw=black,minimum height=6mm,minimum width=6mm]
\tikzstyle{copoint}=[regular polygon,regular polygon sides=3,draw=black,scale=0.75,inner sep=-0.5pt,minimum width=7mm,fill=white]
\tikzstyle{point}=[regular polygon,regular polygon sides=3,draw=black,scale=0.75,inner sep=-0.5pt,minimum width=7mm,fill=white,regular polygon rotate=180]
\tikzstyle{gray point}=[point,fill=gray!40!white]
\tikzstyle{gray copoint}=[copoint,fill=gray!40!white]
\newcommand{\edgearrow}{{\arrow[black]{>}}}
\newcommand{\edgetick}{{\arrow[black,scale=0.7,very thick]{|}}}
\tikzstyle{diredge}=[->]
\tikzstyle{rdiredge}=[<-]
\tikzstyle{medium diredge}=[->]
\tikzstyle{short diredge}=[->]
\tikzstyle{halfedge}=[-)]
\tikzstyle{other halfedge}=[(-]
\tikzstyle{freeedge}=[(-)]
\tikzstyle{white edge}=[line width=5pt,white]
\tikzstyle{tick}=[postaction=decorate,decoration={markings, mark=at position 0.5 with \edgetick}]
\tikzstyle{small map edge}=[|-latex, gray!60!blue, shorten <=0.9mm, shorten >=0.5mm]
\tikzstyle{thick dashed edge}=[very thick,dashed,gray!40]
\tikzstyle{map edge}=[|-latex,very thick, gray!40, shorten <=1mm, shorten >=0.5mm]
\tikzstyle{tickedge}=[postaction=decorate,
\tikzstyle{dirtickedge}=[postaction=decorate,
\tikzstyle{dirdoubletickedge}=[postaction=decorate,
\newcommand{\boxshape}[3]{%
\pgfdeclareshape{#1}{
\inheritsavedanchors[from=rectangle] 
\inheritanchorborder[from=rectangle]
\inheritanchor[from=rectangle]{center}
\inheritanchor[from=rectangle]{north}
\inheritanchor[from=rectangle]{south}
\inheritanchor[from=rectangle]{west}
\inheritanchor[from=rectangle]{east}
\backgroundpath{
\southwest \pgf@xa=\pgf@x \pgf@ya=\pgf@y
\northeast \pgf@xb=\pgf@x \pgf@yb=\pgf@y

\@tempdima=#2
\@tempdimb=#3

\pgfpathmoveto{\pgfpoint{\pgf@xa - 5pt + \@tempdima}{\pgf@ya}}
\pgfpathlineto{\pgfpoint{\pgf@xa - 5pt - \@tempdima}{\pgf@yb}}
\pgfpathlineto{\pgfpoint{\pgf@xb + 5pt + \@tempdimb}{\pgf@yb}}
\pgfpathlineto{\pgfpoint{\pgf@xb + 5pt - \@tempdimb}{\pgf@ya}}
\pgfpathlineto{\pgfpoint{\pgf@xa - 5pt + \@tempdima}{\pgf@ya}}
\pgfpathclose
}
}}
\tikzstyle{map}=[draw,shape=NEbox,inner sep=7pt]
\tikzstyle{mapdag}=[draw,shape=SEbox,inner sep=7pt]
\tikzstyle{maptrans}=[draw,shape=SWbox,inner sep=7pt]
\tikzstyle{mapconj}=[draw,shape=NWbox,inner sep=7pt]
\tikzstyle{probs}=[shape=semicircle,fill=gray!40!white,draw=black,shape border rotate=180,minimum width=1.2cm]
\tikzstyle{arrs}=[-latex,font=\small,auto]
\tikzstyle{arrow plain}=[arrs]
\tikzstyle{arrow dashed}=[dashed,arrs]
\tikzstyle{arrow bold}=[very thick,arrs]
\tikzstyle{arrow hide}=[draw=white!0,-]
\tikzstyle{arrow reverse}=[latex-]
\tikzstyle{cdnode}=[]
\tikzstyle{gn}=[dot,fill=lime!50,minimum width=0.2cm,inner sep=0.5pt,font=\footnotesize]
\tikzstyle{rn}=[dot,fill=red!50,inner sep=0.5pt,minimum width=0.2cm,font=\footnotesize]
\tikzstyle{bn}=[dot,fill=blue,minimum width=0.3cm]
\tikzstyle{rc}=[dot,thick,fill=white,draw = red,minimum width=0.2cm,inner sep=0.5pt,font=\footnotesize]
\tikzstyle{gc}=[dot,thick,fill=white,draw= lime,inner sep=0.5pt,minimum width=0.2cm,font=\footnotesize]
\tikzstyle{bc}=[dot,thick,fill=white,draw= blue,minimum width=0.3cm]
\tikzstyle{label}=[circle,fill=white,minimum width=0.3cm]
\tikzstyle{H box}=[rectangle,fill=yellow,draw=black,xscale=1,yscale=1,font=\small,inner sep=0.75pt,minimum width=0.15cm,minimum height=0.15cm]
\tikzstyle{clocklabel}=[dot,fill=yellow,draw=black,font=\tiny,inner sep=0.75pt]
\tikzstyle{rsn}=[circle split,draw,fill=red,font=\tiny,inner sep=0.75pt]
\tikzstyle{gsn}=[circle split,draw,fill=lime,font=\tiny,inner sep=0.75pt]
\tikzstyle{bsn}=[circle split,draw,fill=blue,font=\tiny,inner sep=0.75pt]
\tikzstyle{rsc}=[circle split,thick,draw= red,draw,fill=white,font=\tiny,inner sep=0.75pt]
\tikzstyle{gsc}=[circle split,thick,draw= lime,draw,fill=white,font=\tiny,inner sep=0.75pt]
\tikzstyle{bsc}=[circle split,thick,draw= blue,draw,fill=white,font=\tiny,inner sep=0.75pt]
\tikzstyle{cnot}=[fill=white,shape=circle,inner sep=-1.4pt]
\tikzstyle{wire label}=[font=\tiny, auto]
\newcommand{\denoteb}[1]{
\left\llbracket #1 \right\rrbracket} 
\newcommand{\den}[1]{\denoteb{#1}^\natural}
\newcommand{\denote}[1]{
\llbracket #1 \rrbracket} 
\newcommand{\dempty}{%
\beginpgfgraphicnamed{scalars-s/emptysquare-small}
\InputIfFileExists{scalars-s/emptysquare-small.tikz}{}{\input{./figures/scalars-s/emptysquare-small.tikz}}
\endpgfgraphicnamed}
\newcommand{\drcup}{\raisebox{-0.15cm}{%
\beginpgfgraphicnamed{scalars-s/cup}
\InputIfFileExists{scalars-s/cup.tikz}{}{\input{./figures/scalars-s/cup.tikz}}
\endpgfgraphicnamed}}
\newcommand{\drcap}{\raisebox{0.15cm}{%
\beginpgfgraphicnamed{scalars-s/cap}
\InputIfFileExists{scalars-s/cap.tikz}{}{\input{./figures/scalars-s/cap.tikz}}
\endpgfgraphicnamed}}
\newcommand{\dsigmaf}{%
\beginpgfgraphicnamed{scalars-s/swapf}
\InputIfFileExists{scalars-s/swapf.tikz}{}{\input{./figures/scalars-s/swapf.tikz}}
\endpgfgraphicnamed}
\newcommand{\dHf}{%
\beginpgfgraphicnamed{scalars-s/Had4f}
\begin{tikzpicture}
	\begin{pgfonlayer}{nodelayer}
		\node [style={H box}] (0) at (0.5, 0) {};
		\node [style=none] (1) at (0.5, 0.3) {};
		\node [style=none] (2) at (0.5, -0.3) {};
	\end{pgfonlayer}
	\begin{pgfonlayer}{edgelayer}
		\draw (1.center) to (0);
		\draw (2.center) to (0);
	\end{pgfonlayer}
\end{tikzpicture}}
\newcommand{\drcupf}{\raisebox{-0.15cm}{%
\beginpgfgraphicnamed{scalars-s/cupf}
\InputIfFileExists{scalars-s/cupf.tikz}{}{\input{./figures/scalars-s/cupf.tikz}}
\endpgfgraphicnamed}}
\newcommand{\drcapf}{\raisebox{0.15cm}{%
\beginpgfgraphicnamed{scalars-s/capf}
\InputIfFileExists{scalars-s/capf.tikz}{}{\input{./figures/scalars-s/capf.tikz}}
\endpgfgraphicnamed}}
\newcommand{\did}{%
\beginpgfgraphicnamed{scalars-s/Id}
\begin{tikzpicture}
	\begin{pgfonlayer}{nodelayer}
		\node [style=none] (1) at (0.5, 0.3) {};
		\node [style=none] (2) at (0.5, -0.3) {};
		\node [style=none] (3) at (0.5, -0.5) {};
		\node [style=none] (4) at (0.5, 0.5) {};
	\end{pgfonlayer}
	\begin{pgfonlayer}{edgelayer}
		\draw (1.center) to (2.center);
	\end{pgfonlayer}
\end{tikzpicture}}
\newcommand{\didf}{%
\beginpgfgraphicnamed{scalars-s/Idf}
\begin{tikzpicture}
	\begin{pgfonlayer}{nodelayer}
		\node [style=none] (1) at (0.5, 0.3) {};
		\node [style=none] (2) at (0.5, -0.3) {};
	\end{pgfonlayer}
	\begin{pgfonlayer}{edgelayer}
		\draw (1.center) to (2.center);
	\end{pgfonlayer}
\end{tikzpicture}}
\newcommand{\gdzz}{%
\beginpgfgraphicnamed{scalars-s/RZ00alpha}
\begin{tikzpicture}
	\begin{pgfonlayer}{nodelayer}
		\node [style=gn] (0) at (0.5, 0) {\footnotesize$\alpha$};
	\end{pgfonlayer}
\end{tikzpicture}}
\newcommand{\gdzo}{%
\beginpgfgraphicnamed{scalars-s/RZ01alpha}
\begin{tikzpicture}
	\begin{pgfonlayer}{nodelayer}
		\node [style=gn] (0) at (0.5, 0.1) {\footnotesize$\alpha$};
		\node [style=none] (1) at (0.5, -0.2) {};
	\end{pgfonlayer}
	\begin{pgfonlayer}{edgelayer}
		\draw (1.center) to (0.center);
	\end{pgfonlayer}

\end{tikzpicture}}
\newcommand{\gdoo}{%
\beginpgfgraphicnamed{scalars-s/RZ11alpha}
\begin{tikzpicture}
	\begin{pgfonlayer}{nodelayer}
		\node [style=gn] (0) at (0.5, 0) {\footnotesize$\alpha$};
		\node [style=none] (1) at (0.5, -0.3) {};
		\node [style=none] (2) at (0.5, 0.3) {};
	\end{pgfonlayer}
	\begin{pgfonlayer}{edgelayer}
		\draw (1.center) to (0.center);
				\draw (2.center) to (0.center);
	\end{pgfonlayer}

\end{tikzpicture}}
\newcommand{\gdto}{%
\beginpgfgraphicnamed{scalars-s/RZ21alpha}
\InputIfFileExists{scalars-s/RZ21alpha.tikz}{}{\input{./figures/scalars-s/RZ21alpha.tikz}}
\endpgfgraphicnamed}
\newcommand{\gpi}{%
\beginpgfgraphicnamed{scalars-s/RZ00pi}
\begin{tikzpicture}
	\begin{pgfonlayer}{nodelayer}
		\node [style=gn] (0) at (0.5, 0) {\footnotesize$\pi$};
	\end{pgfonlayer}
\end{tikzpicture}}
\newcommand{\rdzz}{%
\beginpgfgraphicnamed{scalars-s/RX00alpha}
\begin{tikzpicture}
	\begin{pgfonlayer}{nodelayer}
		\node [style=rn] (0) at (0.5, 0) {\footnotesize$\alpha$};
	\end{pgfonlayer}
\end{tikzpicture}}
\newcommand{\rdzo}{%
\beginpgfgraphicnamed{scalars-s/RX01alpha}
\begin{tikzpicture}
	\begin{pgfonlayer}{nodelayer}
		\node [style=rn] (0) at (0.5, 0.1) {\footnotesize$\alpha$};
		\node [style=none] (1) at (0.5, -0.2) {};
	\end{pgfonlayer}
	\begin{pgfonlayer}{edgelayer}
		\draw (1.center) to (0.center);
	\end{pgfonlayer}

\end{tikzpicture}}
\newcommand{\rdoo}{%
\beginpgfgraphicnamed{scalars-s/RX11alpha}
\begin{tikzpicture}
	\begin{pgfonlayer}{nodelayer}
		\node [style=rn] (0) at (0.5, 0) {\footnotesize$\alpha$};
		\node [style=none] (1) at (0.5, -0.3) {};
		\node [style=none] (2) at (0.5, 0.3) {};
	\end{pgfonlayer}
	\begin{pgfonlayer}{edgelayer}
		\draw (1.center) to (0.center);
				\draw (2.center) to (0.center);
	\end{pgfonlayer}

\end{tikzpicture}}
\tikzstyle{cdiag}=[matrix of math nodes, row sep=3em, column sep=3em, text height=1.5ex, text depth=0.25ex,inner sep=0.5em]
\tikzstyle{arrow above}=[transform canvas={yshift=0.5ex}]
\tikzstyle{arrow below}=[transform canvas={yshift=-0.5ex}]
\newtheorem{Th}{Theorem}[section]
\newtheorem{theorem}[Th]{Theorem}
\newtheorem{lemma}[Th]{Lemma}
\newtheorem{corollary}[Th]{Corollary}
\newtheorem{remark}[Th]{Remark}
\tikzstyle{square green box}=[rectangle,fill=green!30,draw=black]%
\tikzstyle{square red box}=[rectangle,fill=red!30,draw=black]%
\newcommand{\ZX}{\textsc{ZX}}
\tikzstyle{bscalar}=[star,fill=black,draw=black,minimum size=8pt,inner sep=0pt]
\title{A Simplified Stabilizer \ZX-calculus}
\author{Miriam Backens
\institute{School of Mathematics, University of Bristol,  UK}
\email{m.backens@bristol.ac.uk}
\and Simon Perdrix
\institute{CNRS, LORIA, Universit\'e de Lorraine, France}
\email{simon.perdrix@loria.fr}
\and Quanlong Wang
\institute{LORIA, Universit\'e de Lorraine, France\\ Department of Computer Science, University of Oxford, UK }
\email{quanlong.wang@wolfson.ox.ac.uk}
}
\begin{document}

\maketitle

\begin{abstract}
The stabilizer \ZX-calculus is a rigorous graphical language for reasoning about quantum mechanics. The language is sound and complete: a stabilizer \ZX-diagram can be transformed into another one if and only if these two diagrams represent the same quantum evolution or quantum state. We show that the stabilizer \ZX-calculus can be simplified, removing unnecessary equations while keeping only the essential axioms which potentially capture  fundamental structures of quantum mechanics. We thus give a significantly smaller set of axioms and prove that meta-rules like `colour symmetry' and `upside-down symmetry', which were considered as axioms in previous versions of the language, can in fact be derived. In particular, we show that the additional symbol and one of the rules which had been recently introduced to keep track of scalars (diagrams with no inputs or outputs) are not necessary.
\end{abstract}

\section{Introduction}
The \ZX-calculus is a high-level and intuitive graphical language for pure qubit quantum mechanics (QM), based on category theory \cite{coecke_interacting_2011}. It comes with a set of rewrite rules that potentially allow this graphical calculus to replace matrix-based formalisms entirely for certain classes of problems. However, this replacement is only possible without losing deductive power if the \ZX-calculus is \emph{complete} for this class of problems, i.e.\ if any equality that is derivable using matrices can also be derived graphically.

The overall \ZX-calculus for pure state qubit quantum mechanics is incomplete, and it is not obvious
how to complete it \cite{schroeder_incomplete_2014}. Yet, a fragment of the language, the \emph{stabilizer \ZX-calculus} is complete. This fragment is made of the \ZX-diagrams involving angles which are multiples of $\pi/2$ only. The fragment of quantum theory that can be represented by stabilizer \ZX-diagrams is the so-called stabilizer quantum mechanics \cite{gottesman_stabilizer_1997}. Stabilizer QM is a non trivial fragment of quantum mechanics which is in fact efficiently classically simulatable \cite{gottesman_heisenberg_1998}   but which nevertheless exhibits
many important quantum properties, like entanglement and non-locality. It is furthermore of central importance in areas such as quantum error correcting codes \cite{nielsen_quantum_2010} and measurement-based quantum computation \cite{raussendorf_one-way_2001}.

The stabilizer \ZX-calculus is the largest known complete fragment of the \ZX-calculus. Furthermore, it is the core of the overall language since all the fundamental structures -- e.g.\ the axiomatisation of complementary bases \cite{coecke_interacting_2011} -- are present in this fragment. 

Here, we simplify the stabilizer \ZX-calculus, removing unnecessary equations while keeping only the essential axioms, which potentially capture  fundamental structures of quantum mechanics. Simplifying the \ZX-calculus also simplifies the development, and potentially the  efficiency, of automated tools for quantum reasoning, e.g.\ Quantomatic \cite{quanto}. 
We give a set of axioms that is significantly smaller than the usual one and prove that meta-rules like `colour symmetry' and `upside-down symmetry', which were considered as axioms in previous versions of the language can in fact be derived.

This is done by first considering the scalars (Section \ref{s:minimal_scalars}), i.e.\ the diagrams with no inputs or outputs. The completeness of the stabilizer \ZX-calculus was originally established in a setting where scalars are ignored \cite{backens_zx-calculus_2013}. In this setting, when two diagrams are equal according to the rules of the language, their matrices are equal up to a non-zero scalar factor. To make the stabilizer \ZX-calculus with scalars complete, a new symbol  and three rules were added to the original \ZX-calculus \cite{backens_making_2015}. We simplify the completeness proof for the  stabilizer \ZX-calculus by showing that there is no need to introduce a new symbol to make the language complete for scalars. Moreover, we show that one of the three new axioms is not necessary and can be derived from the rest of the language. We end up with only two rules explicitly about scalars, which we prove to be necessary. 

Beyond the treatment of scalars, we also consider a simplified set of rules for the full stabilizer \ZX-calculus (Section \ref{s:simplified}). Usually, in addition to about a dozen explicit rewrite rules, there is a convention that any rule also holds with the colours red and green swapped or with the diagrams flipped upside-down, effectively nearly quadrupling the available set of rewrite rules\footnote{Some rules are symmetric under the operations of swapping the colours and/or flipping them upside-down, hence the effective rule set is not quite four times the size of the explicitly-given one.}. Here, we give a new system of just nine rules for the stabilizer \ZX-calculus with scalars. We prove that these rules are sound and that all the old rules, including their colour-swapped and upside-down versions, can be derived from the new system of rules.

\section{Stabilizer \ZX-calculus}\label{sec:zx}

\subsection{Diagrams and standard interpretation}

A diagram $D:k\to l$ of the stabilizer \ZX-calculus with $k$ inputs and $l$ outputs is generated by:
\begin{center}
\begin{tabular}{|r@{~}r@{~}c@{~}lc|r@{~}r@{~}c@{~}lc|r@{~}r@{~}c@{~}lc|}
\hline
&$R_Z^{(n,m)}(\alpha)$&$:$&$n\to m$ & %
\beginpgfgraphicnamed{scalars-s/spideralpha}
\InputIfFileExists{scalars-s/spideralpha.tikz}{}{\input{./figures/scalars-s/spideralpha.tikz}}
\endpgfgraphicnamed & &$R_X^{(n,m)}(\alpha)$&$:$&$ n\to m$& %
\beginpgfgraphicnamed{scalars-s/spiderredalpha}
\InputIfFileExists{scalars-s/spiderredalpha.tikz}{}{\input{./figures/scalars-s/spiderredalpha.tikz}}
\endpgfgraphicnamed& &$H$&$:$&$1\to 1$ &%
\beginpgfgraphicnamed{scalars-s/Had4}
\InputIfFileExists{scalars-s/Had4.tikz}{}{\input{./figures/scalars-s/Had4.tikz}}
\endpgfgraphicnamed\\
\hline
& $s$&$:$&$0\to 0$ &%
\beginpgfgraphicnamed{scalars/halfscalar}
\begin{tikzpicture}
	\begin{pgfonlayer}{nodelayer}
		\node [style=bscalar] (1) at (0, 0) {};
	\end{pgfonlayer}
\end{tikzpicture}
}
\endpgfgraphicnamed  &
 & $e $&$:$&$0 \to 0$ &%
\beginpgfgraphicnamed{scalars-s/emptysquare-small}
\InputIfFileExists{scalars-s/emptysquare-small.tikz}{}{\input{./figures/scalars-s/emptysquare-small.tikz}}
\endpgfgraphicnamed & &$\sigma$&$:$&$ 2\to 2$& %
\beginpgfgraphicnamed{scalars-s/swap}
\InputIfFileExists{scalars-s/swap.tikz}{}{\input{./figures/scalars-s/swap.tikz}}
\endpgfgraphicnamed \\\hline
 &$\mathbb I$&$:$&$1\to 1$&%
\beginpgfgraphicnamed{scalars-s/Id}
}
\endpgfgraphicnamed &
  &$\epsilon$&$:$&$2\to 0$& %
\beginpgfgraphicnamed{scalars-s/cup}
\InputIfFileExists{scalars-s/cup.tikz}{}{\input{./figures/scalars-s/cup.tikz}}
\endpgfgraphicnamed&
  &$\eta$&$:$&$ 0\to 2$&  %
\beginpgfgraphicnamed{scalars-s/cap}
\InputIfFileExists{scalars-s/cap.tikz}{}{\input{./figures/scalars-s/cap.tikz}}
\endpgfgraphicnamed
  \\\hline
\end{tabular}\\where $m,n\in \mathbb N$ and $\alpha \in \{0,\frac \pi 2, \pi, \frac {-\pi} 2\}$
\end{center}

\begin{itemize}
\item Spacial composition: for any $D_1:a\to b$ and $D_2: c\to d$, $D_1\otimes D_2 : a+c\to b+d$ is constructed by placing $D_1$ and $D_2$ side-by-side, $D_2$ to the right of $D_1$.
\item Sequential composition: for any $D_1:a\to b$ and $D_2: b\to c$, $D_2\circ D_1 : a\to c$ is constructed by placing $D_1$ above $D_2$, connecting the outputs of $D_1$ to the inputs of $D_2$.
\end{itemize}

When equal to $0$, the phase angles of the green and red dots may be omitted:
$$%
\beginpgfgraphicnamed{scalars-s/spiderg}
\InputIfFileExists{scalars-s/spiderg.tikz}{}{\input{./figures/scalars-s/spiderg.tikz}}
\endpgfgraphicnamed := %
\beginpgfgraphicnamed{scalars-s/spidergz}
\InputIfFileExists{scalars-s/spidergz.tikz}{}{\input{./figures/scalars-s/spidergz.tikz}}
\endpgfgraphicnamed\qquad\qquad%
\beginpgfgraphicnamed{scalars-s/spiderr}
\InputIfFileExists{scalars-s/spiderr.tikz}{}{\input{./figures/scalars-s/spiderr.tikz}}
\endpgfgraphicnamed := %
\beginpgfgraphicnamed{scalars-s/spiderrz}
\InputIfFileExists{scalars-s/spiderrz.tikz}{}{\input{./figures/scalars-s/spiderrz.tikz}}
\endpgfgraphicnamed\qquad$$

The standard interpretation of the \ZX-diagrams associates with any diagram $D:n\to m$ a linear map $\denoteb{D}:\mathbb C^{2^n}\to \mathbb C^{2^m}$ inductively defined as follows:
\begin{align*}
  \denoteb{D_1\otimes D_2}&:=\denoteb{D_1}\otimes \denoteb{D_2} &
  \denoteb{~\dHf~}&:=\frac{1}{\sqrt{2}}\left(\begin{array}{@{}c@{}r@{}}1&1\\1&{~\text{-}}1\end{array}\right) &
  \denoteb{~\dempty~}&:=1 \\
  \denoteb{D_2\circ D_1}&:=\denoteb{D_2}\circ\denoteb{D_1} &
  \denoteb{~%
\beginpgfgraphicnamed{scalars/halfscalar}
}
\endpgfgraphicnamed~}&:=\frac{1}{2} &
  \denote{\drcupf}&:=\left(\begin{array}{@{}c@{}c@{}@{}c@{}c@{}}1&0&0&1\end{array}\right) \\
  \denoteb{~\dsigmaf~}&:=\left(\begin{array}{@{}c@{}r@{}@{}c@{}c@{}}1&0&0&0\\0&0&1&0\\0&1&0&0\\0&0&0&1\\\end{array}\right) &
  \denoteb{~\did~}&:=\left(\begin{array}{@{}c@{}c@{}}1&0\\0&1\end{array}\right) &
  \denoteb{\drcapf}&:=\left(\begin{array}{@{}c@{}}1\\0\\0\\1\end{array}\right)
\end{align*}
For green dots, $\denote{R_Z^{(0,0)}(\alpha)}:= 1{+}e^{i\alpha}$, and when $a{+}b>0$, $\denote{R_Z^{(a,b)}(\alpha)}$ is a matrix with $2^a$ columns and $2^b$ rows such that all entries are $0$ except the top left one which is $1$ and the bottom right one which is $e^{i\alpha}$, e.g.:
$$
\def\arraystretch{0.5}
\denoteb{~\gdzz~} = 1+e^{i\alpha} \qquad \denoteb{~\gdzo~} = \left(\begin{array}{@{}c@{}}1\\e^{i\alpha}\end{array}\right) \qquad \denoteb{~\gdoo~} = \left(\begin{array}{@{}c@{}c@{}}1&~0~\\0&~e^{i\alpha}\end{array}\right) \qquad  \denoteb{\gdto} = \left(\begin{array}{@{}c@{}c@{}c@{}c@{}}1~&~0~&~0~&~0~\\0~&~0~&~0~&~e^{i\alpha}\end{array}\right) $$

\noindent For any $a,b\ge  0$, $\denote{R_X^{a,b}(\alpha)}:= \denote{H}^{\otimes b}\times \denote{R_Z^{a,b}(\alpha)} \times \denote{H}^{\otimes a}$, where $M^{\otimes 0} =1$ and for any $k>0$, $M^{\otimes k}=M\otimes M^{\otimes k-1}$. E.g.,
$$
\def\arraystretch{0.5}
\denoteb{~\rdzz~} = 1+e^{i\alpha} \qquad \denoteb{~\rdzo~}= \sqrt2 e^{i\frac \alpha 2}\!\left(\begin{array}{@{}r@{}}\cos(\nicefrac{\alpha}2)\\\text{-}i\sin(\nicefrac{\alpha}2)\end{array}\right) \qquad \denoteb{~\rdoo~} = e^{i\frac \alpha 2}\! \left(\begin{array}{@{}r@{}r@{}}\cos(\nicefrac{\alpha}2)&\text{~~-}i\sin(\nicefrac{\alpha}2)\\\text{-}i\sin(\nicefrac{\alpha}2)&\cos(\nicefrac{\alpha}2)\end{array}\right)
$$

The linear maps that can be represented by stabilizer \ZX-diagrams correspond to the so-called stabilizer fragment of quantum mechanics \cite{gottesman_stabilizer_1997}. Note that \ZX-diagrams with arbitrary angles (no longer necessarily multiples of $\frac \pi 2$) are universal:  for any $m,n\ge 0$ and any linear map $M:\mathbb C^{2^n}\to \mathbb C^{2^m}$, there exists a diagram $D:n\to m$ such that $\denote{D} = M$ \cite{coecke_interacting_2011}. When restricted to angles that are multiples of $\pi/4$, \ZX-diagrams are approximately universal, i.e. any linear map can approximated to arbitrary accuracy by such a \ZX-diagram. In this paper, we focus on the core of the \ZX-calculus formed by the stabilizer \ZX-diagrams.

\subsection{The rewrite rules}

\begin{figure}[ht]
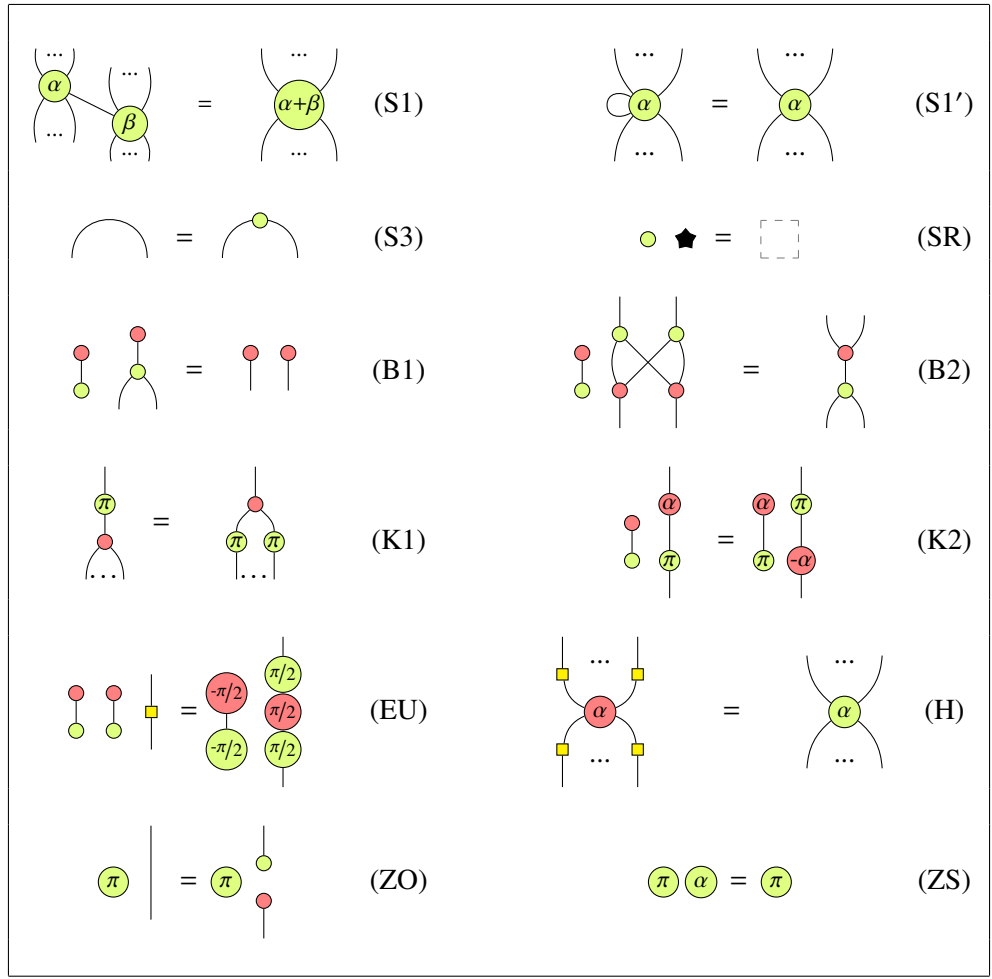

 \centering
  \begin{tabular}{|ccccc|}
   \hline
   &&&& \\
\beginpgfgraphicnamed{scalars/spider-bis}
\InputIfFileExists{scalars/spider-bis.tikz}{}{\input{./figures/scalars/spider-bis.tikz}}
\endpgfgraphicnamed&(S1) &$\qquad$& %
\beginpgfgraphicnamed{scalars/spider2-loop}
\InputIfFileExists{scalars/spider2-loop.tikz}{}{\input{./figures/scalars/spider2-loop.tikz}}
\endpgfgraphicnamed&(S1$'$)\\
   &&&& \\
\beginpgfgraphicnamed{scalars/induced_compact_structure-2wire}
\InputIfFileExists{scalars/induced_compact_structure-2wire.tikz}{}{\input{./figures/scalars/induced_compact_structure-2wire.tikz}}
\endpgfgraphicnamed&(S3) && %
\beginpgfgraphicnamed{scalars/star_rule}
\InputIfFileExists{scalars/star_rule.tikz}{}{\input{./figures/scalars/star_rule.tikz}}
\endpgfgraphicnamed&(SR)\\
   &&&& \\
\beginpgfgraphicnamed{scalars/b1s}
\InputIfFileExists{scalars/b1s.tikz}{}{\input{./figures/scalars/b1s.tikz}}
\endpgfgraphicnamed&(B1) && %
\beginpgfgraphicnamed{scalars/b2s}
\InputIfFileExists{scalars/b2s.tikz}{}{\input{./figures/scalars/b2s.tikz}}
\endpgfgraphicnamed&(B2)\\
   &&&& \\
\beginpgfgraphicnamed{scalars/k1}
\InputIfFileExists{scalars/k1.tikz}{}{\input{./figures/scalars/k1.tikz}}
\endpgfgraphicnamed&(K1) && %
\beginpgfgraphicnamed{scalars/k2s}
\InputIfFileExists{scalars/k2s.tikz}{}{\input{./figures/scalars/k2s.tikz}}
\endpgfgraphicnamed&(K2)\\
   &&&& \\
\beginpgfgraphicnamed{scalars/HadaDecomSingles}
\InputIfFileExists{scalars/HadaDecomSingles.tikz}{}{\input{./figures/scalars/HadaDecomSingles.tikz}}
\endpgfgraphicnamed&(EU) && %
\beginpgfgraphicnamed{scalars/h2}
\InputIfFileExists{scalars/h2.tikz}{}{\input{./figures/scalars/h2.tikz}}
\endpgfgraphicnamed&(H)\\
   &&&& \\
\beginpgfgraphicnamed{scalars/zo1}
\InputIfFileExists{scalars/zo1.tikz}{}{\input{./figures/scalars/zo1.tikz}}
\endpgfgraphicnamed&(ZO) && %
\beginpgfgraphicnamed{scalars/zero_scalar}
\InputIfFileExists{scalars/zero_scalar.tikz}{}{\input{./figures/scalars/zero_scalar.tikz}}
\endpgfgraphicnamed&(ZS)\\
   &&&& \\
   \hline
  \end{tabular}
 \caption{Rules for the stabilizer \ZX-calculus with scalars. All of these rules also hold when flipped upside-down, or with the colours red and green swapped. The right-hand side of (SR) is an empty diagram. Ellipses denote zero or more wires. The sum in (S1) is modulo $2\pi$. }
 \label{fig:ZX_rules}
\end{figure}

The \ZX-calculus is not just a notation: it comes with a set of rewrite rules that allow equalities to be derived entirely graphically.
The reason we are considering the stabilizer \ZX-calculus here is that, for this theory, a \emph{complete} set of rewrite rules is known: this means that any equality that can be derived using matrices can also be derived graphically using that set of rewrite rules \cite{backens_zx-calculus_2013,backens_making_2015}.
On the other hand, the currently-used set of rewrite rules is known to be incomplete for the universal \ZX-calculus, but it is unclear how to complete it \cite{schroeder_incomplete_2014,PW15}.

The set of rewrite rules for the stabilizer \ZX-calculus with scalars -- as used in \cite{backens_making_2015} -- is given in Figure \ref{fig:ZX_rules}. Notice that these rules are symmetric, i.e.\ if $D_1=D_2$ then $D_2=D_1$. 
All of those rules also hold upside-down and/or with the colours red and green swapped.
Whenever a rule contains an ellipsis to indicate that it applies to spiders with different numbers of inputs or outputs, those numbers can take any non-negative integer value, including zero.
We simply denote by  $D_1=D_2$ the existence of a (possibly empty) sequence of rules which transforms $D_1$ into $D_2$.

Rules (SR), (ZO), and (ZS) were newly introduced in \cite{backens_making_2015} to deal with scalars.
The Euler decomposition rule (EU) was introduced in \cite{duncan_graph_2009}.
All other rules were part of the original definition of the \ZX-calculus, although some of them have been modified because equality in the original \ZX-calculus was only up to a global phase, i.e.\ two diagrams were considered equal if they represented matrices that differed by a scalar factor of $e^{i\phi}$ for some $\phi\in(-\pi,\pi]$.

The rules given in Figure \ref{fig:ZX_rules} apply to any sub-diagram. In other words,  if $D_1 = D_2$ then, for any $D$ (with the appropriate number of inputs/outputs), $D\otimes D_1 = D\otimes D_2$ ;   $D_1\otimes D = D_2\otimes D$ ; $D\circ D_1 = D\circ D_2$ ; and  $D_1\circ D = D_2\circ  D$.

In addition to those explicit  rules there is also a meta-rule:
 \emph{`only the topology matters'} \cite{coecke_interacting_2011},
which means that two diagrams represent the same matrix whenever one can be transformed into the other by moving components around without changing their connections.
E.g.
$$%
\beginpgfgraphicnamed{scalars-s/capswap}
\InputIfFileExists{scalars-s/capswap.tikz}{}{\input{./figures/scalars-s/capswap.tikz}}
\endpgfgraphicnamed ~=\drcapf  \qquad\qquad %
\beginpgfgraphicnamed{scalars-s/yanking1}
\InputIfFileExists{scalars-s/yanking1.tikz}{}{\input{./figures/scalars-s/yanking1.tikz}}
\endpgfgraphicnamed~ =~ %
\beginpgfgraphicnamed{scalars-s/yangkingline}
\begin{tikzpicture}[scale=0.6]
	\begin{pgfonlayer}{nodelayer}
		\node [style=none] (0) at (0, 0.5) {};
		\node [style=none] (1) at (0, -0.5) {};
	\end{pgfonlayer}
	\begin{pgfonlayer}{edgelayer}
		\draw (0.center) to (1.center);
	\end{pgfonlayer}
\end{tikzpicture}}
\endpgfgraphicnamed ~\qquad\qquad%
\beginpgfgraphicnamed{scalars-s/commute1}
\InputIfFileExists{scalars-s/commute1.tikz}{}{\input{./figures/scalars-s/commute1.tikz}}
\endpgfgraphicnamed =~%
\beginpgfgraphicnamed{scalars-s/commute2}
\InputIfFileExists{scalars-s/commute2.tikz}{}{\input{./figures/scalars-s/commute2.tikz}}
\endpgfgraphicnamed~\qquad\qquad %
\beginpgfgraphicnamed{scalars-s/bendingnew}
\InputIfFileExists{scalars-s/bendingnew.tikz}{}{\input{./figures/scalars-s/bendingnew.tikz}}
\endpgfgraphicnamed ~=%
\beginpgfgraphicnamed{scalars-s/nonbending}
\InputIfFileExists{scalars-s/nonbending.tikz}{}{\input{./figures/scalars-s/nonbending.tikz}}
\endpgfgraphicnamed ~$$

The  rules of the stabilizer \ZX-calculus given in Figure \ref{fig:ZX_rules} are \emph{sound}: for any diagrams $D_1$ and $D_2$, if $D_1=D_2$ according to the rules of the stabilizer \ZX-calculus, then $\denote{D_1}= \denote{D_2}$.
Furthermore, as mentioned above, this set of rewrite rules is also \emph{complete} for the stabilizer \ZX-calculus, meaning that for any two stabilizer \ZX-calculus diagrams $D_1$ and $D_2$, if $\denote{D_1} = \denote{D_2}$ then $D_1 = D_2$.

\section{Minimal axioms for scalars in the stabilizer \ZX-calculus}
\label{s:minimal_scalars}

To make the stabilizer \ZX-calculus complete for scalars, a new symbol $%
\beginpgfgraphicnamed{scalars/halfscalar}
}
\endpgfgraphicnamed$ together with three axioms (SR), (ZS), and (ZO) were added in \cite{backens_making_2015}. We show in this section that the new symbol and one of those axioms are not actually necessary for completeness. We end up with a simplified set of rules (Figure \ref{fig:ZX_rules} minus (ZS) and with a star-free replacement (IV) for (SR)) in which only two axioms -- (IV) and (ZO) -- are dedicated to scalars. We show the minimality of those two axioms in the sense that they cannot be derived from the other rules of the language.

\subsection{Removing the star}

The symbol $%
\beginpgfgraphicnamed{scalars/halfscalar}
}
\endpgfgraphicnamed$ was introduced in  \cite{backens_making_2015} as an inverse of $%
\beginpgfgraphicnamed{scalars/gnode}
\begin{tikzpicture}
	\begin{pgfonlayer}{nodelayer}
		\node [style=gn] (0) at (0, -0) {};
	\end{pgfonlayer}
\end{tikzpicture}}
\endpgfgraphicnamed$. In fact, there also exists a $%
\beginpgfgraphicnamed{scalars/halfscalar}
}
\endpgfgraphicnamed$-free diagram which is equal to $%
\beginpgfgraphicnamed{scalars/halfscalar}
}
\endpgfgraphicnamed$:

\begin{lemma}\label{starroot2}In the \ZX-calculus, i.e.\ using the rules of Figure \ref{fig:ZX_rules}:
$$%
\beginpgfgraphicnamed{scalars/3linesquare}
\InputIfFileExists{scalars/3linesquare.tikz}{}{\input{./figures/scalars/3linesquare.tikz}}
\endpgfgraphicnamed=%
\beginpgfgraphicnamed{scalars/halfscalar}
}
\endpgfgraphicnamed.$$
\end{lemma}
\begin{proof} $$%
\beginpgfgraphicnamed{scalars/starroot2prf}
\InputIfFileExists{scalars/starroot2prf.tikz}{}{\input{./figures/scalars/starroot2prf.tikz}}
\endpgfgraphicnamed,$$ where the second equality is obtained using the so-called Hopf law 
\[\qquad\qquad\qquad\qquad\qquad\qquad\qquad%
\beginpgfgraphicnamed{scalars/hopfprf1-bis}
\InputIfFileExists{scalars/hopfprf1-bis.tikz}{}{\input{./figures/scalars/hopfprf1-bis.tikz}}
\endpgfgraphicnamed~=~%
\beginpgfgraphicnamed{scalars/hopfprf6}
\begin{tikzpicture}
	\begin{pgfonlayer}{nodelayer}
		\node [style=gn] (0) at (0, -0.25) {};
		\node [style=none] (1) at (0, -0.75) {};
		\node [style=rn] (2) at (0, 0.25) {};
		\node [style=none] (3) at (0, 0.75) {};
	\end{pgfonlayer}
	\begin{pgfonlayer}{edgelayer}
		\draw (1.center) to (0);
		\draw (2) to (3.center);
	\end{pgfonlayer}
\end{tikzpicture}}
\endpgfgraphicnamed\tag{Hopf}\label{eq:hopf}\] proved for instance in \cite{coecke_interacting_2011}. The third equality is based on the fact that $%
\beginpgfgraphicnamed{scalars/2isroot2square-s}
\InputIfFileExists{scalars/2isroot2square-s.tikz}{}{\input{./figures/scalars/2isroot2square-s.tikz}}
\endpgfgraphicnamed$ which can be proved as follows:
\begin{equation}\qquad\qquad\qquad\qquad\qquad%
\beginpgfgraphicnamed{scalars/2isroot2squarepfnew-s}
\InputIfFileExists{scalars/2isroot2squarepfnew-s.tikz}{}{\input{./figures/scalars/2isroot2squarepfnew-s.tikz}}
\endpgfgraphicnamed\label{eq:double}\end{equation}\end{proof}

By Lemma \ref{starroot2}, it is not necessary to introduce the symbol $%
\beginpgfgraphicnamed{scalars/halfscalar}
}
\endpgfgraphicnamed$, since $%
\beginpgfgraphicnamed{scalars/3linesingle}
\begin{tikzpicture}
	\begin{pgfonlayer}{nodelayer}
		\node [style=rn] (0) at (0, 0.25) {};
		\node [style=gn] (1) at (0, -0.25) {};
	\end{pgfonlayer}
	\begin{pgfonlayer}{edgelayer}
		\draw [bend left=45, looseness=1.00] (0) to (1);
		\draw [bend right=45, looseness=1.00] (0) to (1);
		\draw (0) to (1);
	\end{pgfonlayer}
\end{tikzpicture}}
\endpgfgraphicnamed$ already exists in the \ZX-calculus. However, when removing the $%
\beginpgfgraphicnamed{scalars/halfscalar}
}
\endpgfgraphicnamed$, the one axiom %
\beginpgfgraphicnamed{scalars/star_rule}
\InputIfFileExists{scalars/star_rule.tikz}{}{\input{./figures/scalars/star_rule.tikz}}
\endpgfgraphicnamed using this symbol needs to be treated carefully. Indeed, note that this axiom is necessary for the completeness in the sense that the equation \[\qquad\qquad\qquad\qquad\qquad\qquad\qquad%
\beginpgfgraphicnamed{scalars/dotinverse}
\InputIfFileExists{scalars/dotinverse.tikz}{}{\input{./figures/scalars/dotinverse.tikz}}
\endpgfgraphicnamed\tag{IV'}\label{eq:IV'}\] which is true by combining axiom (SR) and Lemma \ref{starroot2}, cannot be proved without axiom (SR) as there is no other axiom that can be used to transform an empty diagram to a non-empty diagram.

To remove the  $%
\beginpgfgraphicnamed{scalars/halfscalar}
}
\endpgfgraphicnamed$, one could straightforwardly replace all its occurrences -- including in the rewrite rules -- by  $%
\beginpgfgraphicnamed{scalars/3linesingle}
}
\endpgfgraphicnamed  %
\beginpgfgraphicnamed{scalars/3linesingle}
}
\endpgfgraphicnamed$, and consider (\ref{eq:IV'}) as an axiom. We show that the following  simpler axiom can be used instead:

\[\qquad\qquad\qquad\qquad\qquad\qquad\quad\qquad%
\beginpgfgraphicnamed{scalars/inverserule}
\InputIfFileExists{scalars/inverserule.tikz}{}{\input{./figures/scalars/inverserule.tikz}}
\endpgfgraphicnamed\tag{IV}\label{eq:IV}\]

\begin{lemma}\label{rootdot} Given the $%
\beginpgfgraphicnamed{scalars/halfscalar}
}
\endpgfgraphicnamed$-free rules of  Figure \ref{fig:ZX_rules}, (IV) and (IV$'$) are equivalent:
$$%
\beginpgfgraphicnamed{scalars/inverserule}
\InputIfFileExists{scalars/inverserule.tikz}{}{\input{./figures/scalars/inverserule.tikz}}
\endpgfgraphicnamed\quad\Leftrightarrow\quad%
\beginpgfgraphicnamed{scalars/dotinverse}
\InputIfFileExists{scalars/dotinverse.tikz}{}{\input{./figures/scalars/dotinverse.tikz}}
\endpgfgraphicnamed.$$
\end{lemma}
\begin{proof}
$[\Rightarrow]$. Decompose the green dot using \eqref{eq:double}, and then apply the inverse rule \eqref{eq:IV} twice.

$[\Leftarrow]$. We have:
$$%
\beginpgfgraphicnamed{scalars/squaretoinverse}
\InputIfFileExists{scalars/squaretoinverse.tikz}{}{\input{./figures/scalars/squaretoinverse.tikz}}
\endpgfgraphicnamed$$
\end{proof}

\begin{theorem} The $%
\beginpgfgraphicnamed{scalars/halfscalar}
}
\endpgfgraphicnamed$-free rules of  Figure \ref{fig:ZX_rules} together with the inverse rule (IV) are complete for non-zero stabilizer quantum mechanics.
\end{theorem}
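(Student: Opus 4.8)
The plan is to reduce the statement to the completeness of the original rule set of Figure~\ref{fig:ZX_rules}, which is already known, by systematically eliminating the star. The key device is the substitution $\tau$ that replaces every occurrence of $\tikzfig{scalars/halfscalar}$ -- both in diagrams and inside the rewrite rules -- by the star-free diagram $\tikzfig{scalars/3linesingle}\tikzfig{scalars/3linesingle}$, which is equal to the star by Lemma~\ref{starroot2}. First I would record the elementary properties of $\tau$: it is compositional (it commutes with $\otimes$ and $\circ$), it fixes every star-free diagram, and it is interpretation-preserving, $\denote{\tau(D)}=\denote{D}$ -- this last point being exactly the soundness of Lemma~\ref{starroot2}.

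Next I would trace the effect of $\tau$ on the complete system. Every rule of Figure~\ref{fig:ZX_rules} except (SR) is already star-free, so $\tau$ leaves it untouched; the sole rule affected is (SR), whose image under $\tau$ is precisely \eqref{eq:IV'}, since \eqref{eq:IV'} is by construction the star-eliminated form of (SR) (as in the remark that \eqref{eq:IV'} follows from (SR) and Lemma~\ref{starroot2}). Because $\tau$ is a compositional, interpretation-preserving re-encoding and the calculus is a congruence, any derivation in the original system lifts to a derivation in the star-free system consisting of the star-free rules of Figure~\ref{fig:ZX_rules} together with \eqref{eq:IV'}: a step that rewrites a subdiagram by a star-free rule is reproduced by the same rule applied inside the $\tau$-translated context, while a step that uses (SR) is reproduced by an application of \eqref{eq:IV'}. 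Hence for star-free $D_1,D_2$ with $\denote{D_1}=\denote{D_2}$, the known completeness of the original system gives $D_1=D_2$ there, and the lifted derivation gives $D_1=D_2$ in the star-free system with \eqref{eq:IV'}. This establishes that the latter system is complete for non-zero stabilizer quantum mechanics.

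Finally I would apply Lemma~\ref{rootdot}, which states that \eqref{eq:IV} and \eqref{eq:IV'} are equivalent over the remaining star-free rules. Swapping \eqref{eq:IV'} for \eqref{eq:IV} therefore leaves the set of derivable equalities unchanged, yielding completeness of the star-free rules together with \eqref{eq:IV}, as claimed.

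The routine part is verifying that $\tau$ is a functorial, interpretation-preserving re-encoding under which every star-free rule survives verbatim; this is immediate once stated. The only real content is the handling of (SR): after star-elimination it becomes \eqref{eq:IV'}, and the crux is that this equation can in turn be derived from the simpler \eqref{eq:IV}. That crux is exactly Lemma~\ref{rootdot}, so the main obstacle has already been isolated and discharged there, and what remains in this proof is to assemble the pieces and to take care that the reduction targets the completeness statement for the intended non-zero fragment, matching the setting in which the original result was proved.
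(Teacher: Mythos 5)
Your proposal is correct and follows essentially the same route as the paper: syntactically replace every star by its star-free equivalent (Lemma~\ref{starroot2}), observe that this turns each use of (SR) into an instance of (IV$'$) while leaving all other rules untouched, and then invoke Lemma~\ref{rootdot} to trade (IV$'$) for (IV). The paper's own proof is just a terser statement of exactly this substitution argument.
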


\begin{proof}
By completeness of the rules given in Figure  \ref{fig:ZX_rules}, any true equation in stabilizer quantum mechanics can be derived. One can syntactically replace  all occurrences of the $%
\beginpgfgraphicnamed{scalars/halfscalar}
}
\endpgfgraphicnamed$ by $%
\beginpgfgraphicnamed{scalars/3linesingle}
}
\endpgfgraphicnamed  %
\beginpgfgraphicnamed{scalars/3linesingle}
}
\endpgfgraphicnamed$ and get a valid proof using $%
\beginpgfgraphicnamed{scalars/halfscalar}
}
\endpgfgraphicnamed$-free rules and the inverse rule (IV), where each use of the axiom (SR) is replaced by  \eqref{eq:IV'}.
\end{proof}

\subsection{The zero scalar rule is not necessary}

Among all the scalars of the stabilizer \ZX-calculus, those whose interpretation is zero, like \gpi, play a special role. In particular $\gpi$ is an absorbing element: for any diagram $D$, $\gpi \otimes D = \gpi$ should be derivable. In \cite{backens_making_2015}, two axioms are dedicated to zero scalars: the rules (ZO) and (ZS). 
$$%
\beginpgfgraphicnamed{scalars/zo1}
\InputIfFileExists{scalars/zo1.tikz}{}{\input{./figures/scalars/zo1.tikz}}
\endpgfgraphicnamed\quad \text{(ZO)} \qquad \qquad\qquad\qquad     %
\beginpgfgraphicnamed{scalars/zero_scalar}
\begin{tikzpicture}
	\begin{pgfonlayer}{nodelayer}
		\node [style=gn] (0) at (-1, -0) {$~\pi~$};
		\node [style=gn] (1) at (-0.5, -0) {$~\alpha~$};
		\node [style=none] (2) at (0, -0) {$=$};
		\node [style=gn] (3) at (0.5, -0) {$~\pi~$};
	\end{pgfonlayer}
\end{tikzpicture}}
\endpgfgraphicnamed\quad \text{(ZS)}$$
Intuitively, one can derive  $\gpi \otimes D = \gpi$ for an arbitrary $D$ as follows: first the (ZO) rule is used to `cut' all wires of $D$, then the remaining pieces are absorbed using the (ZS) rule. 

We show that the (ZS) rule $%
\beginpgfgraphicnamed{scalars/zero_scalar}
}
\endpgfgraphicnamed$ is not necessary for completeness and can be derived from the other rules of the language.
To  this end, we first show that even without the presence of the absorbing element $\gpi$, some  angles can be removed:

\begin{lemma}\label{alphadelete} For any $\alpha$, $~%
\beginpgfgraphicnamed{scalars/alphadelete}
\begin{tikzpicture}
	\begin{pgfonlayer}{nodelayer}
		\node [style=rn] (0) at (0.2, 0.25) {};
		\node [style=gn] (1) at (0.2, -0.25) {$\alpha$};
		\node [style=none] (2) at (0.5, -0) {$=$};
		\node [style=rn] (3) at (0.8, 0.25) {};
		\node [style=gn] (4) at (0.8, -0.25) {};
	\end{pgfonlayer}
	\begin{pgfonlayer}{edgelayer}
		\draw (0) to (1);
		\draw (3) to (4);
	\end{pgfonlayer}
\end{tikzpicture}}
\endpgfgraphicnamed~$ is derivable without using the (ZS) rule.
\end{lemma}

\begin{proof} First, note that the equation for $\alpha=\pi$ can be derived directly from the (K1) rule instantiated with no output:
$$%
\beginpgfgraphicnamed{scalars/k1-0-proof}
\InputIfFileExists{scalars/k1-0-proof.tikz}{}{\input{./figures/scalars/k1-0-proof.tikz}}
\endpgfgraphicnamed$$
Then for arbitrary $\alpha$, we have:
$$%
\beginpgfgraphicnamed{scalars/alphadeleteproof}
\InputIfFileExists{scalars/alphadeleteproof.tikz}{}{\input{./figures/scalars/alphadeleteproof.tikz}}
\endpgfgraphicnamed$$
\end{proof}

\begin{theorem}\label{zerotosaclarzerolm} For any $\alpha$, $%
\beginpgfgraphicnamed{scalars/zero_scalar}
}
\endpgfgraphicnamed$ is derivable without using the (ZS) rule.
\end{theorem}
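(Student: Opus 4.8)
The plan is to derive the (ZS) equation $\gpi\otimes c_\alpha=\gpi$ — where $\gpi$ is the zero scalar and $c_\alpha$ denotes the phase-$\alpha$ scalar appearing in $\tikzfig{scalars/zero_scalar}$ — from Lemma~\ref{alphadelete} together with the rules (ZO), (S1) and the inverse rule (IV), never invoking (ZS) itself. Soundness is immediate, since $\denote{\gpi}=1+e^{i\pi}=0$ forces both sides to vanish; the entire content of the statement is therefore to produce a (ZS)-free rewrite sequence, and Lemma~\ref{alphadelete} is exactly the tool that lets a phase be discarded without appealing to the absorbing element $\gpi$.

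The key difficulty is that the two factors on the left-hand side sit side by side, so the spider-fusion rule (S1) cannot act on them directly and there is at first no dot into which Lemma~\ref{alphadelete} can bite. First I would unfuse a leg of $\gpi$ with (S1) and use (ZO) — whose role is precisely to let the zero scalar act along a wire — to attach $c_\alpha$ to that leg, turning the tensor product into a single connected diagram in which $\alpha$ now labels an internal dot in exactly the configuration treated by Lemma~\ref{alphadelete}. (For $\alpha=\pi$ this is the situation already handled by the $\pi$-copy rule (K1), as in the proof of that lemma.) Applying Lemma~\ref{alphadelete} then deletes $\alpha$ and leaves the phase-free instance of the diagram.

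It remains to collapse the phase-free diagram back to $\gpi$: I would re-fuse the now-unlabelled dots with (S1), clear the residual $\sqrt2$ factors accumulated along the way using the inverse rule (IV) (these are exactly the factors controlled by \eqref{eq:double} and Lemma~\ref{rootdot}), and undo the auxiliary wire with a second use of (ZO), recovering $\gpi$ exactly.

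The main obstacle I expect is bookkeeping rather than conceptual. I must verify that every step of the chain — the restructuring via (S1) and (ZO), the angle deletion, and the scalar cancellation via (IV) — genuinely stays inside the (ZS)-free fragment, so that the rule being eliminated is never silently reintroduced; and I must ensure that all the $\sqrt2$ and $2$ scalar factors balance exactly, so that the final diagram is literally $\gpi$ and not $\gpi$ multiplied by some spurious nonzero scalar.
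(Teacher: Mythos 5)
You have collected the right ingredients---Lemma~\ref{alphadelete} as the phase-killing tool, together with (S1), (ZO) and (IV)---and you are right that soundness is trivial and the entire content is the (ZS)-free rewrite chain. But the assembly has two genuine flaws. First, (ZO) is a \emph{catalytic} rule: both of its sides contain the intact zero scalar $\gpi$, so it can only fire when a standalone $\gpi$ occurs as a subdiagram. Your opening move unfuses $\gpi$ with (S1), which destroys exactly that pattern, so the (ZO) application you plan next (to ``attach'' $\gdzz$ to the freed leg) can never fire. Moreover, (ZO) \emph{disconnects} wires in the presence of the catalyst, so aiming for ``a single connected diagram'' runs the rule backwards; and even if you did connect the $\alpha$-dot to the $\pi$-dot, spider fusion would merely produce the green scalar with phase $\alpha+\pi$, about which Lemma~\ref{alphadelete} (a statement about a green--red inner product) says nothing. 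The paper's derivation does the opposite of your plan: it keeps $\gpi$ intact as the catalyst, unfuses $\gdzz$ into a state--effect pair joined by a wire, and uses (ZO) to \emph{cut} that wire; the cut places a red dot against the green $\alpha$-state, which is precisely the configuration of Lemma~\ref{alphadelete}, and the phase then disappears.

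Second, and more seriously, your endgame misses the crux of the theorem. After the phase is deleted, one is left with $\gpi$ tensored with nonzero phase-free scalars (copies of the red--green scalar, whose interpretation is $\sqrt 2$). You propose to clear these with (IV), but (IV) only cancels that scalar against its designated \emph{inverse} scalar, which is nowhere present in the diagram; inserting inverse pairs by running (IV) in reverse changes nothing, since every cancellation consumes one inserted inverse. What is actually needed is the absorption equation stating that $\gpi$ tensored with the $\sqrt 2$ scalar equals $\gpi$ alone. This is not an axiom and does not follow from (IV) by itself: it is exactly the first thing the paper proves inside this theorem, by a catalytic argument combining (ZO) and (IV), and only then is the general-$\alpha$ case reduced to it via (S1), (ZO) and Lemma~\ref{alphadelete}. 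Without this absorption step---or some equivalent---your rewrite chain terminates at $\gpi$ times leftover scalars, which is not the right-hand side of (ZS).
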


\begin{proof}
First, we show that~ $%
\beginpgfgraphicnamed{scalars/piroot2}
\begin{tikzpicture}
	\begin{pgfonlayer}{nodelayer}
		\node [style=gn] (0) at (0, -0) {$\pi$};
		\node [style=rn] (1) at (0.5, 0.25) {};
		\node [style=gn] (2) at (0.5, -0.25) {};
		\node [style=none] (3) at (1, -0) {$=$};
		\node [style=gn] (4) at (1.5, -0) {$\pi$};
	\end{pgfonlayer}
	\begin{pgfonlayer}{edgelayer}
		\draw (1) to (2);
	\end{pgfonlayer}
\end{tikzpicture}}
\endpgfgraphicnamed$~: 
$$
\beginpgfgraphicnamed{scalars/piabsorbroot2pf}
\InputIfFileExists{scalars/piabsorbroot2pf.tikz}{}{\input{./figures/scalars/piabsorbroot2pf.tikz}}
\endpgfgraphicnamed.
$$ Now, for any $\alpha$, 
\begin{align*}
\beginpgfgraphicnamed{scalars/piabsorbalphaproofzerorule}
\InputIfFileExists{scalars/piabsorbalphaproofzerorule.tikz}{}{\input{./figures/scalars/piabsorbalphaproofzerorule.tikz}}
\endpgfgraphicnamed.
\end{align*}
\end{proof}

\begin{remark}The proof that $%
\beginpgfgraphicnamed{scalars/zero_scalar}
}
\endpgfgraphicnamed$ can be derived from the other rules of the language is not specific to the stabilizer case. In the full \ZX-calculus, i.e.\ for arbitrary angles $\alpha \in [0, 2\pi)$, $%
\beginpgfgraphicnamed{scalars/zero_scalar}
}
\endpgfgraphicnamed$ can be derived from the other rules of the language, too. 
\end{remark}

\begin{corollary}
The rules of Figure \ref{fig:ZX_rules} without (ZS) and with (SR) replaced by (IV) are complete for stabilizer quantum mechanics.
\end{corollary}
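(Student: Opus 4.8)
The plan is to deduce the Corollary by a single syntactic transformation of proofs, reusing the two substitution arguments already established rather than proving anything new diagrammatically. The input is the known completeness of the full rule set of Figure~\ref{fig:ZX_rules} (with the symbol $\star$ and the axioms (SR), (ZO), (ZS)) for stabilizer quantum mechanics: every true equation, including those between diagrams interpreting to the zero map, has a proof in that system. The aim is to convert any such proof into one that uses neither $\star$ nor (ZS), with (SR) replaced by (IV).

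I would carry this out in two substitution steps applied to an arbitrary proof. First, following the argument that removes the $\star$ (the completeness theorem for the $\star$-free rules together with (IV)), I replace every occurrence of $\star$ by the $\star$-free diagram of Lemma~\ref{starroot2} and turn each application of (SR) into an application of (IV$'$); by Lemma~\ref{rootdot} the latter is derivable once (IV) is taken as an axiom, so this step eliminates $\star$ and (SR) while preserving validity. Second, I replace each application of (ZS) by the derivation supplied by Theorem~\ref{zerotosaclarzerolm}, which establishes (ZS) using only the remaining rules and never (ZS) itself. After both substitutions the resulting proof uses exactly the rules of the Corollary, and since the original proof was arbitrary, every true equation of stabilizer quantum mechanics is derivable in the reduced system.

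The step I expect to demand the most care is ensuring the two substitutions commute and jointly cover the zero-scalar fragment. I would check that the derivation of (ZS) furnished by Theorem~\ref{zerotosaclarzerolm} is itself $\star$-free (or is made so by the first substitution), so that the two replacements can be performed on the same proof without reintroducing $\star$. I would also confirm that the usual route to the zero case — using (ZO) to cut every wire of a diagram and then absorbing the remaining components into $\gpi$ — still succeeds once (ZS) is available only as a derived equation rather than a primitive axiom; this is precisely what Theorem~\ref{zerotosaclarzerolm} guarantees, since it makes (ZS) usable wherever it previously appeared. With these compatibility checks discharged, the composite substitution turns any proof in the system of Figure~\ref{fig:ZX_rules} into one over the rules of the Corollary, which yields the claimed completeness.
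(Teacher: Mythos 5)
Your proposal is correct and follows essentially the same route as the paper: the paper obtains this corollary by combining its Theorem on completeness of the $\star$-free rules with (IV) (proved exactly by the syntactic substitution of $\star$ by its $\star$-free equivalent and of (SR) by (IV$'$)) with Theorem \ref{zerotosaclarzerolm}, which shows (ZS) is derivable from the remaining rules, so every use of (ZS) can be replaced by that derivation. Your compatibility check is also consistent with the paper, since the derivation of (ZS) in Theorem \ref{zerotosaclarzerolm} is carried out within the $\star$-free system.
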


\subsection{Minimality of the scalar axioms}
\label{s:scalar_minimality}

From the  rules and symbol dedicated to scalars in Figure \ref{fig:ZX_rules}, we have eliminated one symbol and one rule, with two scalar rules remaining. We now show that this set of rules is optimal for scalars in the sense that both of those axioms are necessary. Indeed, the inverse rule (IV) cannot be derived using the other rules since it is the only rule which equates an empty diagram and a non empty diagram. To prove that the zero rule (ZO) is also necessary, we introduce an alternative interpretation of the diagrams, which is sound for all the rules of the language except for the zero rule.

\begin{theorem} \label{zononderive}
The (ZO) rule  $$%
\beginpgfgraphicnamed{scalars/zo1}
\InputIfFileExists{scalars/zo1.tikz}{}{\input{./figures/scalars/zo1.tikz}}
\endpgfgraphicnamed$$ cannot be derived from the other rules of Figure \ref{fig:ZX_rules} without (ZS) and with (SR) replaced by (IV).
\end{theorem}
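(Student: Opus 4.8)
The standard technique for proving that a rule is not derivable is to exhibit an alternative interpretation (a model) of the diagrams that is sound for all the \emph{other} rules but is violated by the rule in question. So the plan is to construct a map $\den{\cdot}$ sending each diagram to some algebraic object, such that: (i) $\den{\cdot}$ is a monoidal functor, i.e.\ it respects both sequential and spatial composition as well as the compact structure and ``only topology matters''; (ii) $\den{D_1}=\den{D_2}$ whenever $D_1=D_2$ is any rule of Figure~\ref{fig:ZX_rules} \emph{other} than (ZO), with (SR) replaced by (IV) and (ZS) removed; and (iii) the two sides of (ZO) receive \emph{different} values under $\den{\cdot}$. Given such an interpretation, if (ZO) were derivable from the other rules, then by (ii) the two sides of (ZO) would have equal image, contradicting (iii); hence (ZO) is not derivable.

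The key observation guiding the choice of model is \emph{what} (ZO) does that no other rule does. The (ZO) rule is the unique rule allowing one to ``cut'' a wire, i.e.\ to disconnect the diagram or lower the dimension/support of the underlying map; every other rule preserves some invariant that (ZO) breaks. I would therefore look for an alternative interpretation taking values in a semiring or in matrices over a suitable carrier, designed so that all the spider, bialgebra (B1,B2), copy (K1,K2), Euler (EU), Hadamard (H), identity-scalar (IV) rules hold, but where the left- and right-hand sides of (ZO) are separated. A natural candidate is to reinterpret the diagrams using the \emph{same} matrix semantics but over a modified scalar structure, or to send everything to $\mathbb{C}$-valued maps while inflating the standard interpretation by a factor that detects the connectivity change in (ZO). Concretely, one promising route is to compose the standard interpretation with an operation insensitive to all rules except (ZO) --- for instance tracking a parity or a count (number of zero-valued green-$\pi$ components, or the ``total degree'') that is preserved by spiders, bialgebra and copy laws but altered precisely when (ZO) severs a connection.

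The steps I would carry out, in order, are: first, define the alternative interpretation $\den{\cdot}$ explicitly on all nine generators ($R_Z$, $R_X$, $H$, $\sigma$, $\mathbb I$, $\epsilon$, $\eta$, $e$, and the scalar) and extend it functorially along $\otimes$ and $\circ$; second, verify soundness for each of the remaining rules (S1), (S1$'$), (S3), (B1), (B2), (K1), (K2), (EU), (H), (IV), and their colour-swapped / upside-down variants, checking in particular that the compact-structure and topology meta-rules are respected; and third, compute $\den{\cdot}$ on both sides of (ZO) and exhibit the discrepancy. Since the rules come with colour-swap and upside-down symmetry, I would make the interpretation symmetric under these operations from the outset so that soundness need only be checked once per rule rather than four times.

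The main obstacle is the \emph{design} of the interpretation in step one: it must thread the needle of being simultaneously sound for a fairly rich set of rules --- including the delicate bialgebra rules (B1),(B2) relating the two colours and the Euler rule (EU) which tightly constrains how $H$, green and red dots interact --- while still being broken by (ZO). These rules are exactly the ones that pin down the standard interpretation up to very little freedom, so there is little ``room'' to deviate; the invariant distinguishing (ZO) must be one that survives all of the algebraic laws relating the two complementary bases. Verifying soundness of (EU) and the two bialgebra rules under the candidate interpretation is therefore where I expect the real work to lie, and I would devote most of the effort to finding a model in which these hold automatically (e.g.\ by building the model as a quotient or a deformation of the standard semantics in which connectivity, rather than the linear-algebraic value, is what (ZO) changes).
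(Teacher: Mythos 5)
Your meta-strategy is exactly the paper's: exhibit an alternative interpretation that is sound for every rule except (ZO) and separates the two sides of (ZO). But your proposal never actually produces that interpretation --- step one is left as an open design problem, with only hedged candidates (``a modified scalar structure'', ``a parity or count''), and you yourself identify it as the main obstacle. That construction is the entire mathematical content of the theorem; the functoriality and rule-by-rule verification you outline are routine once a model is in hand. As it stands, this is a proof plan, not a proof.

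Moreover, one of your concrete suggestions is provably a dead end: you propose ``to compose the standard interpretation with an operation insensitive to all rules except (ZO)''. No such operation exists, because (ZO) is itself \emph{sound} for the standard semantics --- the scalar on both sides is $1+e^{i\pi}=0$, so both sides of (ZO) denote the same (zero) matrix, and any function of $\denote{\cdot}$ therefore assigns them equal values. The model must retain precisely the information that multiplication by zero destroys, namely connectivity, and so it cannot factor through $\denote{\cdot}$. This is what the paper's construction does: $\den{D}$ is not a number or matrix but another \ZX-diagram on doubled wires, an \emph{angle-free encoding} in which every green or red dot of $D$ becomes a pair of dots (one of each colour, read off the phase only mod $\pi$, with an edge joining the pair exactly when the phase is $\pm\frac{\pi}{2}$), and the Hadamard generator is sent to a swap of the two copies. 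All rules except (ZO) are sound for this encoding --- with (EU) the delicate check, as you correctly anticipated --- while the encodings of the two sides of (ZO) have different \emph{standard} interpretations (one is connected where the other is not), so they cannot be equal as diagrams, and (ZO) is not derivable. Your instinct that ``connectivity, rather than the linear-algebraic value'' is the right invariant points in this direction, but without a concrete realisation of it that survives (S1), (B2) and (EU), the argument has a genuine gap.
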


\begin{proof}
Consider an alternative interpretation $\llbracket \cdot \rrbracket^{\natural}$ of stabilizer \ZX-calculus diagrams: for any diagram $D: n\to m$, let $\den{D}: 2n\to 2m$ be a diagram defined as follows:

\centerline{$
\def\arraystretch{0.5}
\denote{D_1\otimes D_2}^\natural := \denote{D_1}^\natural{\otimes} \denote{D_2}^\natural\quad  ~\denote{D_2\circ D_1}^\natural := \denote{D_2}^\natural{\circ}\denote{D_1}^\natural\quad~\den{~\dempty~} := \dempty \quad~ \den{~\didf~}:=~\did~~~\did$}

\centerline{$
\def\arraystretch{0.5}
\den{~\dHf~} := ~\dsigmaf~\qquad\den{~\dsigmaf~}:= ~\dsigmaf\!\!\!\!\!\!\!\dsigmaf~\qquad \den{\drcupf}:=  ~\drcup\!\!\!\!\!\!\!\!\!\!\!\!\!\!\!\!\!\drcup~ \qquad \den{\drcapf}:= ~\drcap\!\!\!\!\!\!\!\!\!\!\!\!\!\!\!\!\!\drcap~ $}

~\\
\noindent \begin{tabular}{lccc}
 When $\alpha=0\bmod \pi$,& $\den{~%
\beginpgfgraphicnamed{scalars-s/spiderg-a}
\InputIfFileExists{scalars-s/spiderg-a.tikz}{}{\input{./figures/scalars-s/spiderg-a.tikz}}
\endpgfgraphicnamed~}  ~:=~ %
\beginpgfgraphicnamed{scalars-s/spiderinterpretationsc-2}
\InputIfFileExists{scalars-s/spiderinterpretationsc-2.tikz}{}{\input{./figures/scalars-s/spiderinterpretationsc-2.tikz}}
\endpgfgraphicnamed$&\quad&$\den{~%
\beginpgfgraphicnamed{scalars-s/spiderr-a}
\InputIfFileExists{scalars-s/spiderr-a.tikz}{}{\input{./figures/scalars-s/spiderr-a.tikz}}
\endpgfgraphicnamed~}  ~:=~ %
\beginpgfgraphicnamed{scalars-s/rspiderinterpretationsc-2}
\InputIfFileExists{scalars-s/rspiderinterpretationsc-2.tikz}{}{\input{./figures/scalars-s/rspiderinterpretationsc-2.tikz}}
\endpgfgraphicnamed$\\&&&\\
  When $\alpha=\frac \pi 2\bmod \pi$,& $\den{~%
\beginpgfgraphicnamed{scalars-s/spiderg-a}
\InputIfFileExists{scalars-s/spiderg-a.tikz}{}{\input{./figures/scalars-s/spiderg-a.tikz}}
\endpgfgraphicnamed~}  ~:=~ %
\beginpgfgraphicnamed{scalars-s/spiderinterpretationsc-2-cc}
\InputIfFileExists{scalars-s/spiderinterpretationsc-2-cc.tikz}{}{\input{./figures/scalars-s/spiderinterpretationsc-2-cc.tikz}}
\endpgfgraphicnamed$&&$\den{~%
\beginpgfgraphicnamed{scalars-s/spiderr-a}
\InputIfFileExists{scalars-s/spiderr-a.tikz}{}{\input{./figures/scalars-s/spiderr-a.tikz}}
\endpgfgraphicnamed~}  ~:=~ %
\beginpgfgraphicnamed{scalars-s/rspiderinterpretationsc-2-cc}
\InputIfFileExists{scalars-s/rspiderinterpretationsc-2-cc.tikz}{}{\input{./figures/scalars-s/rspiderinterpretationsc-2-cc.tikz}}
\endpgfgraphicnamed$\\
 \end{tabular}

~\\
Basically, $\den D$ is an angle-free encoding of $D$, where each dot is encoded by two dots, one of each colour. When the angle is $\pm\frac \pi 2$ the two copies are connected. Note that the encodings of $\alpha$ and $\alpha +\pi$ are the same.  The interpretation of the Hadamard node, $\den \dHf$, swaps the two sides of the encoding.  It is a routine check that $\llbracket \cdot \rrbracket^{\natural}$ is a sound interpretation for all rules in Figure \ref{fig:ZX_rules} with (SR) replaced by (IV), except for (ZO). Here, we just give the most interesting case, that of the Euler decomposition rule (EU):
 $$ \begin{array}{lll}
 \left\llbracket ~%
\beginpgfgraphicnamed{scalars/righthaddecom}
\InputIfFileExists{scalars/righthaddecom.tikz}{}{\input{./figures/scalars/righthaddecom.tikz}}
\endpgfgraphicnamed~\right\rrbracket^{\natural}&~=~&%
\beginpgfgraphicnamed{scalars/doubleinthadprf1}
\InputIfFileExists{scalars/doubleinthadprf1.tikz}{}{\input{./figures/scalars/doubleinthadprf1.tikz}}
\endpgfgraphicnamed~~=~~%
\beginpgfgraphicnamed{scalars/doubleinthadprf2}
\InputIfFileExists{scalars/doubleinthadprf2.tikz}{}{\input{./figures/scalars/doubleinthadprf2.tikz}}
\endpgfgraphicnamed\vspace{0.5cm}\\
 &=&%
\beginpgfgraphicnamed{scalars/doubleinthadprf3}
\InputIfFileExists{scalars/doubleinthadprf3.tikz}{}{\input{./figures/scalars/doubleinthadprf3.tikz}}
\endpgfgraphicnamed~~=~~%
\beginpgfgraphicnamed{scalars/doubleinthadprf4}
\InputIfFileExists{scalars/doubleinthadprf4.tikz}{}{\input{./figures/scalars/doubleinthadprf4.tikz}}
\endpgfgraphicnamed~~=~~\left\llbracket ~ %
\beginpgfgraphicnamed{scalars/lefthaddecom}
\InputIfFileExists{scalars/lefthaddecom.tikz}{}{\input{./figures/scalars/lefthaddecom.tikz}}
\endpgfgraphicnamed~\right\rrbracket^{\natural}
 \end{array}$$
The first equation is simply the definition of $\den{\cdot}$. The second step consists of applying the (B2) rule on the top right of the diagram. The third step involves ($i$) transforming the  `square' into a single green dot, and ($ii$)  using the (S1) rule to merge adjacent same-colour dots. Finally, we apply the Hopf law, leading to the encoding of the RHS of the (EU) rule. 
 
 If (ZO) could be derived from the other rules of the language, then soundness of $\den{\cdot}$ would imply that:
 \begin{equation}\label{eq:zerorulenatural}
  \left\llbracket~ %
\beginpgfgraphicnamed{scalars/zeroruledisconect}
\begin{tikzpicture}
	\begin{pgfonlayer}{nodelayer}
		\node [style=gn] (0) at (0.5, 0) {$\pi$};
		\node [style=none] (1) at (1, -0.75) {};
		\node [style=rn] (2) at (1, -0.25) {};
		\node [style=none] (3) at (1, 0.75) {};
		\node [style=gn] (4) at (1, 0.25) {};
	\end{pgfonlayer}
	\begin{pgfonlayer}{edgelayer}
		\draw (3.center) to (4);
		\draw (2) to (1.center);
	\end{pgfonlayer}
\end{tikzpicture}}
\endpgfgraphicnamed~\right\rrbracket^{\natural} =  \left\llbracket~ %
\beginpgfgraphicnamed{scalars/zeroruleconnect}
\begin{tikzpicture}
	\begin{pgfonlayer}{nodelayer}
		\node [style=none] (0) at (0.75, 0.5) {};
		\node [style=gn] (1) at (0.5, 0) {$\pi$};
		\node [style=none] (2) at (0.75, -0.5) {};
	\end{pgfonlayer}
	\begin{pgfonlayer}{edgelayer}
		\draw (0.center) to (2.center);
	\end{pgfonlayer}
\end{tikzpicture}}
\endpgfgraphicnamed~\right\rrbracket^{\natural}.
 \end{equation}
 On the other hand, by applying $\den{\cdot}$ to the two sides of (ZO), we find that:
 \[
  \left\llbracket~ %
\beginpgfgraphicnamed{scalars/zeroruledisconect}
}
\endpgfgraphicnamed~\right\rrbracket^{\natural}=~~%
\beginpgfgraphicnamed{scalars/doubleintzeroruledisconect}
\InputIfFileExists{scalars/doubleintzeroruledisconect.tikz}{}{\input{./figures/scalars/doubleintzeroruledisconect.tikz}}
\endpgfgraphicnamed \qquad \qquad  \text {and} \qquad \qquad 
 \left\llbracket~ %
\beginpgfgraphicnamed{scalars/zeroruleconnect}
}
\endpgfgraphicnamed~\right\rrbracket^{\natural}=~~%
\beginpgfgraphicnamed{scalars/doubleintzeroruleconnect}
\InputIfFileExists{scalars/doubleintzeroruleconnect.tikz}{}{\input{./figures/scalars/doubleintzeroruleconnect.tikz}}
\endpgfgraphicnamed
 \]
 i.e.\ the left-hand side of \eqref{eq:zerorulenatural} and the right-hand side of \eqref{eq:zerorulenatural} do not have the same standard interpretation. Hence, by soundness of $\denote{\cdot}$ and $\den{\cdot}$, \eqref{eq:zerorulenatural} cannot be derived from the other rules. 
 
 Therefore, the (ZO) rule is not derivable in the stabilizer \ZX-calculus.
 \end{proof}

To sum up: we have removed the rules (SR) and (ZS), and replaced them with the inverse rule (IV). The remaining rules about scalars -- (IV) and (ZO) -- have been shown not to be derivable from the other rules of the \ZX-calculus. They therefore form a set of minimal axioms for scalars in the stabilizer \ZX-calculus.

\section{Simplified Rules}
\label{s:simplified}

In this section, we give a new system of rules for the stabilizer \ZX-calculus, shown in Figure \ref{figure3}. The new rules are much simpler than those given in Figure \ref{fig:ZX_rules}, while being just as powerful: the new set of rules can be proved to be equivalent to the old ones without any reference to the convention that any rule also holds with the colours red and green swapped or with diagrams flipped upside-down, yet has the advantage of possessing fewer equalities.

\begin{figure}[h]
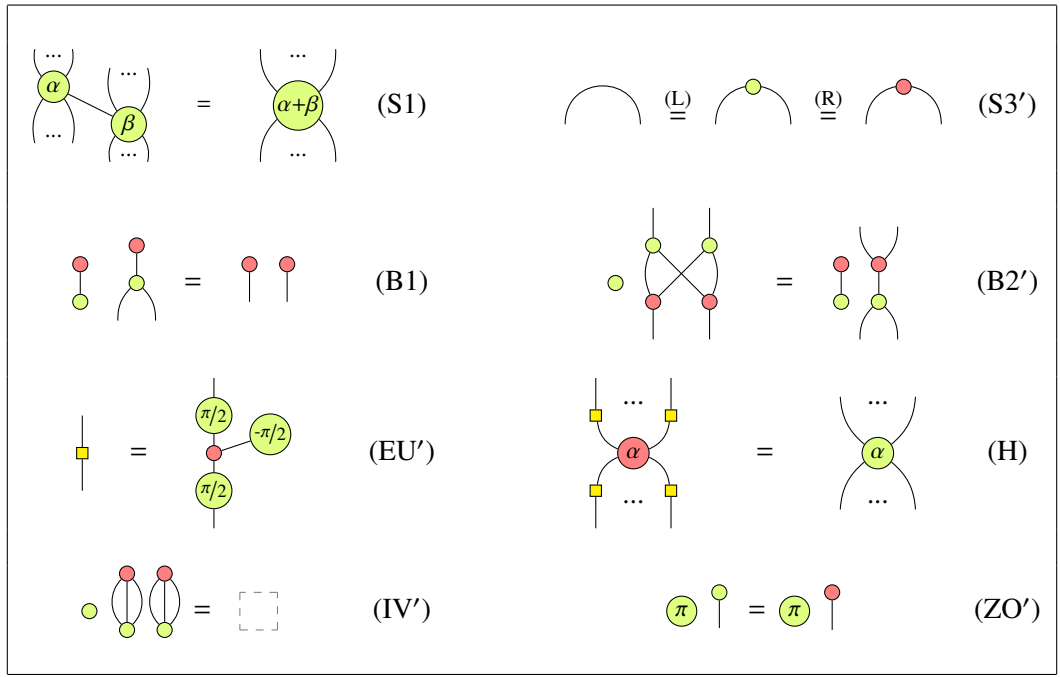

 \centering
 \begin{tabular}{|ccccc|}
  \hline
  &&&& \\
\beginpgfgraphicnamed{scalars/spider-bis}
\InputIfFileExists{scalars/spider-bis.tikz}{}{\input{./figures/scalars/spider-bis.tikz}}
\endpgfgraphicnamed&(S1)&$\qquad$&%
\beginpgfgraphicnamed{scalars/induced_compact_structure}
\InputIfFileExists{scalars/induced_compact_structure.tikz}{}{\input{./figures/scalars/induced_compact_structure.tikz}}
\endpgfgraphicnamed&(S3$'$)\\
  &&&& \\
\beginpgfgraphicnamed{scalars/b1s}
\InputIfFileExists{scalars/b1s.tikz}{}{\input{./figures/scalars/b1s.tikz}}
\endpgfgraphicnamed&(B1)&&%
\beginpgfgraphicnamed{scalars/b2snew}
\InputIfFileExists{scalars/b2snew.tikz}{}{\input{./figures/scalars/b2snew.tikz}}
\endpgfgraphicnamed&(B2$'$)\\
  &&&& \\
\beginpgfgraphicnamed{scalars/HadaDecomSingles-prime}
\InputIfFileExists{scalars/HadaDecomSingles-prime.tikz}{}{\input{./figures/scalars/HadaDecomSingles-prime.tikz}}
\endpgfgraphicnamed&(EU$'$)&&%
\beginpgfgraphicnamed{scalars/h2}
\InputIfFileExists{scalars/h2.tikz}{}{\input{./figures/scalars/h2.tikz}}
\endpgfgraphicnamed&(H)\\
  &&&& \\
\beginpgfgraphicnamed{scalars/dotinverse}
\InputIfFileExists{scalars/dotinverse.tikz}{}{\input{./figures/scalars/dotinverse.tikz}}
\endpgfgraphicnamed &(IV$'$)&&%
\beginpgfgraphicnamed{scalars/zo1-prime}
\InputIfFileExists{scalars/zo1-prime.tikz}{}{\input{./figures/scalars/zo1-prime.tikz}}
\endpgfgraphicnamed&(ZO$'$)\\
  &&&& \\
  \hline
 \end{tabular}
 \caption{Simplified rules for the stabilizer \ZX-calculus, still using the conventions that the right-hand side of (IV$'$) is an empty diagram and that ellipses denote zero or more wires.}\label{figure3}
\end{figure}

\subsection{Soundness of the simplified rule set}
\label{s:soundness_simplified}

To show that the rewrite rules given in Figure \ref{figure3} are a valid rule system for the \ZX-calculus, we first need to show that they are sound. The rewrite rules in Figure \ref{fig:ZX_rules} are known to be sound \cite{coecke_interacting_2011,backens_making_2015}, thus to prove soundness of the new system it is sufficient to show that all new rules can be derived from the old ones.

\begin{theorem}
 The rewrite rules given in Figure \ref{figure3} are sound.
\end{theorem}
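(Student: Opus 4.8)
The plan is to exploit the remark immediately preceding the statement: since every rule of Figure \ref{fig:ZX_rules} is already known to be sound, it suffices to exhibit, for each rule of Figure \ref{figure3}, a derivation of that rule inside the old system. Soundness of the old rules then guarantees that the two sides of each new rule have the same standard interpretation, which is exactly the soundness assertion for that rule. Thus the whole theorem reduces to eight small derivations, one per rule of Figure \ref{figure3}, and I would organise the proof rule by rule.

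First I would dispose of the rules that require no work. The rules (S1), (B1), and (H) appear verbatim in Figure \ref{fig:ZX_rules}, so their soundness is immediate. The rule (IV$'$) has in effect already been handled in the text: it is true by combining the old axiom (SR) with Lemma \ref{starroot2}, as recorded at \eqref{eq:IV'}. This leaves the four genuinely primed rules (S3$'$), (B2$'$), (EU$'$), and (ZO$'$), each of which is a scalar-accurate variant of its unprimed counterpart (S3), (B2), (EU), (ZO) in Figure \ref{fig:ZX_rules}.

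For these four I would derive each primed rule from its unprimed version together with the other old rules, freely using the convention that every old rule is also available with the colours swapped and flipped upside-down. Concretely, (S3$'$) should follow from (S3) and the spider and topology rules; (B2$'$) from (B2) after correcting the scalar using (SR)/(IV$'$) and the doubling identity \eqref{eq:double}; (EU$'$) from (EU) by adjusting the scalar prefactor with the scalar results of Section \ref{s:minimal_scalars}; and (ZO$'$) from (ZO) together with (ZS) (or, equivalently, with the absorbing-scalar facts of Lemma \ref{alphadelete} and Theorem \ref{zerotosaclarzerolm}). In each case the underlying diagram surgery is routine wire-bending and spider fusion.

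The main obstacle is the scalar bookkeeping rather than the topology. Because equality in the scalar-aware calculus is on the nose and not merely up to a global phase, one must track the scalar factors exactly when passing between a primed rule and its unprimed form, inserting the correct number of $\tikzfig{scalars/3linesingle}$ factors via (IV$'$), \eqref{eq:double}, and the zero-scalar results. I expect the delicate cases to be (EU$'$) and (ZO$'$), where the scalar prefactors are nontrivial; for (S3$'$) and (B2$'$) the scalar correction should amount to a single application of a scalar identity.
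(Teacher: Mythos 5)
Your proposal is correct and takes essentially the same route as the paper: soundness is reduced to deriving each rule of Figure \ref{figure3} from the old, sound system of Figure \ref{fig:ZX_rules}, with (S1), (B1), (H) and (S3$'$) immediate (the colour-swap convention covers (S3$'$)), (IV$'$) obtained from (SR) and Lemma \ref{starroot2}, (B2$'$) from (B2) via \eqref{eq:double}, (ZO$'$) from (ZO) followed by Theorem \ref{zerotosaclarzerolm}, and (EU$'$) from (EU) together with auxiliary results. The only point you understate is (EU$'$): the paper's derivation is not mere scalar bookkeeping but goes through the state-level identity \eqref{eq:Y-state} (proved from the colour-swapped (EU), (IV), (B1) and Lemma \ref{alphadelete}) before applying (EU) itself — but since these are exactly the ingredients you allow yourself, your plan goes through.
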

\begin{proof}
 The rules (S1), (S3$'$), (B1) and (H) of Figure \ref{figure3} already exist in Figure \ref{fig:ZX_rules}, with (S3$'$) containing both the colour-swapped and the original version of (S3). Thus, to prove that the new system is sound, it suffices to show that the remaining rules -- (B2$'$), (EU$'$), (IV$'$) and (ZO$'$) -- can be derived from the rules in Figure \ref{fig:ZX_rules}.
 \begin{itemize}
  \item Soundness of rule (B2$'$) follows immediately from (B2) via the fact that $%
\beginpgfgraphicnamed{scalars/2isroot2square-s}
\InputIfFileExists{scalars/2isroot2square-s.tikz}{}{\input{./figures/scalars/2isroot2square-s.tikz}}
\endpgfgraphicnamed$, as shown in the proof of Lemma \ref{starroot2}.
  \item For soundness of rule (EU$'$), note that:
   \begin{equation}\label{eq:Y-state}
\beginpgfgraphicnamed{scalars/hadsimplifyprfirst}
\InputIfFileExists{scalars/hadsimplifyprfirst.tikz}{}{\input{./figures/scalars/hadsimplifyprfirst.tikz}}
\endpgfgraphicnamed,
   \end{equation}
   where we have used the colour-swapped version of (EU), (IV), (B1) and Lemma \ref{alphadelete}.

   Then:
   \begin{align*}
\beginpgfgraphicnamed{scalars/hadsimplifyprf1}
\InputIfFileExists{scalars/hadsimplifyprf1.tikz}{}{\input{./figures/scalars/hadsimplifyprf1.tikz}}
\endpgfgraphicnamed= %
\beginpgfgraphicnamed{scalars/hadsimplifyprf2}
\InputIfFileExists{scalars/hadsimplifyprf2.tikz}{}{\input{./figures/scalars/hadsimplifyprf2.tikz}}
\endpgfgraphicnamed= %
\beginpgfgraphicnamed{scalars/hadsimplifyprf3}
\InputIfFileExists{scalars/hadsimplifyprf3.tikz}{}{\input{./figures/scalars/hadsimplifyprf3.tikz}}
\endpgfgraphicnamed=%
\beginpgfgraphicnamed{scalars/Had2}
\begin{tikzpicture}
	\begin{pgfonlayer}{nodelayer}
		\node [style={H box}] (0) at (0.5, 0) {};
		\node [style=none] (1) at (0.5, 0.5) {};
		\node [style=none] (2) at (0.5, -0.5) {};
	\end{pgfonlayer}
	\begin{pgfonlayer}{edgelayer}
		\draw (1.center) to (0);
		\draw (2.center) to (0);
	\end{pgfonlayer}
\end{tikzpicture}}
\endpgfgraphicnamed
   \end{align*}
   using \eqref{eq:Y-state}, the colour-swapped version of (S1), (EU), and (IV).
  \item Soundness of rule (IV$'$) follows immediately from Lemma \ref{rootdot}.
  \item For soundness of rule (ZO$'$), start from the left-hand side and apply (ZO) followed by Theorem \ref{zerotosaclarzerolm} to find:
   \begin{align*}
\beginpgfgraphicnamed{scalars/zosimproofnew}
\InputIfFileExists{scalars/zosimproofnew.tikz}{}{\input{./figures/scalars/zosimproofnew.tikz}}
\endpgfgraphicnamed
   \end{align*}
 \end{itemize}
 This completes the proof.
\end{proof}

\subsection{Completeness of the simplified rule set}

If we want to replace the rules in Figure \ref{fig:ZX_rules} with those in Figure \ref{figure3}, we also need to be able to derive any rule in the former -- including their colour-swapped and upside-down variants -- from the latter. As the system in Figure \ref{fig:ZX_rules} is known to be complete \cite{backens_zx-calculus_2013,backens_making_2015}, this is equivalent to proving completeness for the new rule system.

\begin{theorem}\label{thm:completeness_with_topology}
 The rewrite rules given in Figure \ref{figure3}, together with the topology meta rule, are complete for the stabilizer \ZX-calculus.
\end{theorem}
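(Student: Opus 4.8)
The plan is to establish completeness of the new rule system (Figure \ref{figure3}) by deriving every rule of the old system (Figure \ref{fig:ZX_rules}), \emph{together with their colour-swapped and upside-down variants}, from the new rules. Since Figure \ref{fig:ZX_rules} (minus (ZS), with (SR) replaced by (IV)) is already known to be complete by the Corollary above, recovering all its rules suffices. The central conceptual difficulty is that the old system took colour-symmetry and upside-down-symmetry as \emph{meta-rules}, so its completeness implicitly assumes both variants of each axiom are available for free; the new system makes no such assumption, so I must \emph{derive} these symmetries rather than postulate them. Concretely, I would first prove that the new rules imply the upside-down and colour-swapped versions of each axiom in Figure \ref{figure3} itself, and then use those to reconstruct the missing old rules.

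First I would establish the two symmetry principles as lemmas. For upside-down symmetry, I would exploit the (co)compact structure: the cup $\eta$ and cap $\epsilon$, governed by (S3$'$) and the topology meta rule, let me bend every wire, so flipping a diagram vertically corresponds to conjugating by caps and cups. Because each new rule is stated for spiders with arbitrarily many inputs and outputs (the ellipses), I expect the upside-down version of each rule to follow by capping off and re-bending, using (S3$'$) to straighten the resulting zig-zags. For colour symmetry, the key tool is the Hadamard box $H$: rule (H) and the Euler-decomposition rule (EU$'$) together let me express a red spider as a green spider conjugated by Hadamards, so a colour-swapped rule is obtained by pre- and post-composing the original with $H$ on every leg and then pushing the Hadamards through using (H), (EU$'$), and the bialgebra rules (B1), (B2$'$). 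Lemma \ref{alphadelete} and the derived identity (\ref{eq:Y-state}) in the soundness proof will be the workhorses for handling the $\pm\tfrac{\pi}{2}$ phase bookkeeping that colour-swapping introduces.

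With both symmetries in hand, I would then recover the specific old rules that are not literally present in Figure \ref{figure3}. Rule (S3) follows from (S3$'$) (which already packages both colours). Rule (B2) follows from (B2$'$) by the identity $\tikzfig{scalars/2isroot2square-s}$ shown in Lemma \ref{starroot2}, run in reverse. The original Euler rule (EU) is recovered from (EU$'$) using (\ref{eq:Y-state}) and (IV), essentially reversing the soundness derivation. The old (SR) is recovered via Lemma \ref{rootdot}, which makes (IV) and (IV$'$) interderivable, combined with Lemma \ref{starroot2} to reintroduce the star as $\tikzfig{scalars/3linesingle}\tikzfig{scalars/3linesingle}$. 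The commutation/copy rules (K1) and (K2) I would derive from the bialgebra rules together with the $\pi$-copy instance of (K1) established inside the proof of Lemma \ref{alphadelete}, again invoking colour and upside-down symmetry to obtain all four variants. Finally (ZO) and (ZS) follow from (ZO$'$) and Theorem \ref{zerotosaclarzerolm}, as already exhibited in the soundness argument.

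The hard part will be making the two symmetry lemmas genuinely rigorous \emph{without circularity}: I must verify that the derivations of the upside-down and colour-swapped variants use only the new rules (and the topology meta rule), never silently reaching for a symmetry convention. In particular the colour-swap argument is delicate because conjugating a spider by Hadamards on every wire forces me to track the extra $\tfrac{\pi}{2}$ phases and scalar factors introduced by (EU$'$) and (\ref{eq:Y-state}), and these must cancel exactly; a single miscounted scalar would break completeness for scalars, which is precisely what the new system is designed to preserve. I would therefore organise the proof so that colour symmetry is established once, as a single meta-lemma applicable to an arbitrary diagram, and then applied uniformly, rather than re-deriving each colour-swapped rule ad hoc.
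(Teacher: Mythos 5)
Your overall strategy --- recover every rule of Figure \ref{fig:ZX_rules} together with its colour-swapped and upside-down variants from the new rules, obtaining upside-down symmetry from the compact structure and the topology meta-rule, and colour symmetry as a single Hadamard-conjugation meta-lemma --- is exactly the paper's plan. But your execution contains a circularity that would sink the proof. The results you name as ``workhorses'', Lemma \ref{alphadelete} and \eqref{eq:Y-state}, were both derived using the \emph{old} rule set of Figure \ref{fig:ZX_rules}: \eqref{eq:Y-state} uses the colour-swapped version of (EU), and the proof of Lemma \ref{alphadelete} begins by invoking (K1). Inside a completeness proof for Figure \ref{figure3} you may not use either of them until they have been re-derived from the new rules alone. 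Worse, you propose to obtain (K1) from ``the $\pi$-copy instance of (K1) established inside the proof of Lemma \ref{alphadelete}'' --- but that proof \emph{uses} (K1); it does not establish it. Filling this hole is precisely where the bulk of the paper's work lies: the appendix re-derives, from Figure \ref{figure3} alone and in a careful order, the Hopf law, a decomposition of the green dot, (IV), (B2), (S1$'$), the $\pm\frac{\pi}{2}$ state-transformation identities (the analogues of \eqref{eq:Y-state}, now proved from (EU$'$)), the supplementarity property, an inner-product lemma, phase-free forms of the $\pi$ state and a $\pi$-copy lemma, and only then derives (K1), (EU), (K2) and (ZO), each proof relying on the earlier ones. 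Your proposal offers no replacement for this chain.

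Two further points. First, your colour-swap meta-lemma is over-complicated and slightly off: once you prove that the Hadamard node is self-inverse (from (S3$'$) and (H)) and derive the colour-swapped version of (H), conjugating every leg by Hadamards swaps colours with \emph{no} phase or scalar bookkeeping whatsoever, because (H) preserves the phase; (EU$'$) plays no role here, and your concern about tracking extra $\pm\frac{\pi}{2}$ phases suggests you intend to decompose $H$ via Euler angles, which is both unnecessary and harder. You also never mention the self-inverse property, without which the conjugating Hadamards cannot be cancelled. Second, your claim that (ZO) follows from (ZO$'$) ``as already exhibited in the soundness argument'' is backwards: the soundness argument derives (ZO$'$) from (ZO), and reversing a derivation is not automatic; the paper needs a separate argument (apply (IV), then (ZO$'$) several times, together with the copy rule and spider rules) to go in the direction you need.
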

\begin{proof}
 We derive the rules in Figure \ref{fig:ZX_rules} one-by-one, each together with its colour-swapped and upside-down versions, in the following order: (H), (S1) \& (S3), (B1), (IV), (B2), (S1$'$), (K1), (EU), (K2), (ZO). Later proofs rely on earlier ones. These derivations are given as separate lemmas in Appendix \ref{a:completeness_proof}.
 
 As an example, we provide a sketch derivation of \eqref{eq:k2sketch}, which is the rule (K2) with $\alpha=\frac{\pi}{2}$:
 \begin{equation}\label{eq:k2sketch}
\beginpgfgraphicnamed{scalars/k2sketch}
\InputIfFileExists{scalars/k2sketch.tikz}{}{\input{./figures/scalars/k2sketch.tikz}}
\endpgfgraphicnamed
 \end{equation}
 To prove this, we first need to derive some relationships between the states with phases $\frac{\pi}{2}$ and $-\frac{\pi}{2}$ in the two colours. These equations, given below, essentially follow from (EU) (see Lemma \ref{piby2transformation}):
 \begin{equation*}\label{eq:pi2}
\beginpgfgraphicnamed{scalars/piby2trans}
\InputIfFileExists{scalars/piby2trans.tikz}{}{\input{./figures/scalars/piby2trans.tikz}}
\endpgfgraphicnamed,\qquad\quad\quad %
\beginpgfgraphicnamed{scalars/piby2transcolorch}
\InputIfFileExists{scalars/piby2transcolorch.tikz}{}{\input{./figures/scalars/piby2transcolorch.tikz}}
\endpgfgraphicnamed,\qquad\quad\quad %
\beginpgfgraphicnamed{scalars/piby2redtowhite}
\InputIfFileExists{scalars/piby2redtowhite.tikz}{}{\input{./figures/scalars/piby2redtowhite.tikz}}
\endpgfgraphicnamed.
\end{equation*}
 Next, we show that %
\beginpgfgraphicnamed{scalars/piby2saclarmultiplys}
\InputIfFileExists{scalars/piby2saclarmultiplys.tikz}{}{\input{./figures/scalars/piby2saclarmultiplys.tikz}}
\endpgfgraphicnamed (cf.\ Corollary \ref{cor:piby2scalars}): 
\begin{align*}
\beginpgfgraphicnamed{scalars/piby2saclarmultiplyprf2}
\InputIfFileExists{scalars/piby2saclarmultiplyprf2.tikz}{}{\input{./figures/scalars/piby2saclarmultiplyprf2.tikz}}
\endpgfgraphicnamed
\end{align*}
Finally, the $\frac{\pi}{2}$ case of (K2) can be derived as follows:
\begin{align*}
\beginpgfgraphicnamed{scalars/k2ruleprf}
\InputIfFileExists{scalars/k2ruleprf.tikz}{}{\input{./figures/scalars/k2ruleprf.tikz}}
\endpgfgraphicnamed
 \end{align*}
 Details of this derivation can be found in Lemma \ref{k2rule}.
\end{proof}

Note that the derivation of (K2) only works within the stabilizer fragment of the \ZX-calculus, i.e.\ for phases that are integer multiples of $\frac{\pi}{2}$; if more general phase angles are allowed, (K2) may still be necessary.

\section{Conclusion and perspectives}

The stabilizer \ZX-calculus has a complete set of rewrite rules, which allow any equality that can be derived using matrices to also be derived graphically.
We introduce a simplified but still complete version of the stabilizer \ZX-calculus with significantly fewer rewrite rules. In particular, many rules obtained from others by swapping colours and/or flipping diagrams upside-down are no longer assumed. Our aim is to minimise the axioms of the language in order to pinpoint the fundamental structures of quantum mechanics, and also simplify the development and the efficiency of automated  tools for quantum reasoning, like Quantomatic \cite{quanto}.

The simplified stabilizer \ZX-calculus can also serve as a backbone for further developments, in particular concerning other fragments, like the real stabilizer \ZX-calculus \cite{duncan_pivoting_2014} or the full calculus (allowing arbitrary angles). Several rules we showed to be derivable in the stabilizer \ZX-calculus are also derivable in the full \ZX-calculus: e.g.\ (ZS), which is valid for arbitrary angles, and (K1). The derivation of (K2) on the other hand is valid for the stabilizer fragment only.  Recently, a new rule called supplementarity was proved to be necessary for the (full) \ZX-calculus \cite{PW15}, and in particular  for the $\pi/4$-fragment of the \ZX-calculus, which corresponds to the so called Clifford+T quantum mechanics. Even if supplementarity and (K2) rules can be derived in the stabilizer \ZX-calculus, a future project is to establish a simple, possibly minimal, set of axioms for the stabilizer \ZX-calculus which contains the rules known to be necessary for arbitrary angles (like supplementarity or (K2)), while avoiding rules which are in some sense specific to the $\pi/2$ fragment, e.g.\ (EU).

\subparagraph*{Acknowledgements}

The authors would like to thank Bob Coecke, Ross Duncan, Emmanuel Jeandel, and Aleks Kissinger for valuable discussions. MB acknowledges funding from EPSRC,  QW acknowledges funding from R\'egion Lorraine.



\bibliographystyle{eptcs}
\bibliography{refs}


\appendix
\section{Completeness proof for the simplified rule set}
\label{a:completeness_proof}

First note that, because of the principle that `only the topology matters', the upside-down versions of any rewrite rules can be derived by simply applying cups or caps to all outputs or inputs.

Now, to derive the colour-swapped version of (H), it is useful to first show that the Hadamard node is self-inverse.

\begin{lemma}\label{rh2}
The Hadamard node is self-inverse:
\begin{equation}\label{eq:Hadamard_self-inv}
\beginpgfgraphicnamed{scalars/hadsquare}
\InputIfFileExists{scalars/hadsquare.tikz}{}{\input{./figures/scalars/hadsquare.tikz}}
\endpgfgraphicnamed
 \end{equation}
\end{lemma}
\begin{proof}
As `only the topology matters', we can use (S3$'$) directly, together with the colour change rule, (H):
\begin{align*}
\beginpgfgraphicnamed{scalars/hadsquareprf0}
\begin{tikzpicture}
	\begin{pgfonlayer}{nodelayer}
		\node [style=none] (0) at (0, 0.75) {};
		\node [style={H box}] (1) at (0, 0.25) {};
		\node [style={H box}] (2) at (0, -0.25) {};
		\node [style=none] (3) at (0, -0.75) {};
	\end{pgfonlayer}
	\begin{pgfonlayer}{edgelayer}
		\draw (3.center) to (2);
		\draw (0.center) to (2);
	\end{pgfonlayer}
\end{tikzpicture}}
\endpgfgraphicnamed=%
\beginpgfgraphicnamed{scalars/hadsquareprf1}
\InputIfFileExists{scalars/hadsquareprf1.tikz}{}{\input{./figures/scalars/hadsquareprf1.tikz}}
\endpgfgraphicnamed=%
\beginpgfgraphicnamed{scalars/hadsquareprf2}
\InputIfFileExists{scalars/hadsquareprf2.tikz}{}{\input{./figures/scalars/hadsquareprf2.tikz}}
\endpgfgraphicnamed
 \end{align*}
This yields the desired result.
\end{proof}

\begin{lemma}\label{hswap}
  The colour-swapped version of (H) can be derived:
\begin{equation}\label{eq:H_swap}
\beginpgfgraphicnamed{scalars/hswap}
\InputIfFileExists{scalars/hswap.tikz}{}{\input{./figures/scalars/hswap.tikz}}
\endpgfgraphicnamed
 \end{equation}
\end{lemma}
\begin{proof}
This follows immediately by applying Hadamard nodes to all inputs and outputs of (H) and using \eqref{eq:Hadamard_self-inv}.
\end{proof}

Given the colour-swapped version of (H) and the self-inverse property of the Hadamard node, colour-swapped versions of any rewrite rules can be derived by applying Hadamards to all inputs and outputs, using (H) and \eqref{eq:H_swap}, and \eqref{eq:Hadamard_self-inv}. We will not state this explicitly for each separate rule.

Note that being able to derive colour-swapped and upside-down versions of any rewrite rule means we do not need to do any more work to derive (S1), (S3), (B1) and (H) in the form in which they appear in Figure \ref{fig:ZX_rules}.

\begin{lemma}[Hopf law]\label{hopflaw}
 The Hopf law holds, i.e.:
 \begin{equation}\label{eq:Hopf}
\beginpgfgraphicnamed{scalars/hopf}
\InputIfFileExists{scalars/hopf.tikz}{}{\input{./figures/scalars/hopf.tikz}}
\endpgfgraphicnamed
 \end{equation}
\end{lemma}
\begin{proof}
 By the topology meta rule, we can introduce a twist in one of the wires connecting the red and green dots on the left-hand side of the above. Then:
 \begin{align*}
\beginpgfgraphicnamed{scalars/hopfproof}
\InputIfFileExists{scalars/hopfproof.tikz}{}{\input{./figures/scalars/hopfproof.tikz}}
\endpgfgraphicnamed
 \end{align*}
 where we used (S3$'$), (S1), the red spider rule and (B2$'$).
\end{proof}

\begin{lemma}\label{dotdecom}
A  dot can be decomposed:
\begin{equation}\label{eq:dotdecom}
\beginpgfgraphicnamed{scalars/dotdecomp}
\InputIfFileExists{scalars/dotdecomp.tikz}{}{\input{./figures/scalars/dotdecomp.tikz}}
\endpgfgraphicnamed
 \end{equation}
 \end{lemma}
 \begin{proof}
 Starting from the right-hand side, use (IV$'$), (S1), the Hopf law, the colour-swapped version of (B1), (S3$'$), \eqref{eq:Hadamard_self-inv}, \eqref{eq:H_swap} and the red spider rule. Then:
 \begin{align*}
\beginpgfgraphicnamed{scalars/dotdecompoof}
\InputIfFileExists{scalars/dotdecompoof.tikz}{}{\input{./figures/scalars/dotdecompoof.tikz}}
\endpgfgraphicnamed
  \end{align*}
 This completes the proof.
 \end{proof}

\begin{lemma}\label{inverse}
 The inverse rule (IV) can be derived:
 \begin{equation}\label{eq:iv}
\beginpgfgraphicnamed{scalars/inverserule}
\InputIfFileExists{scalars/inverserule.tikz}{}{\input{./figures/scalars/inverserule.tikz}}
\endpgfgraphicnamed
 \end{equation}
\end{lemma}
\begin{proof}
The proof is the same as the $[\Leftarrow]$ part proof of 
Lemma \ref{rootdot}, using (IV$'$), the Hopf law and \eqref{eq:dotdecom}.
\end{proof}

\begin{lemma}\label{b2}
(B2) can be derived:
 \begin{equation}\label{eq:B2}
\beginpgfgraphicnamed{scalars/b2s}
\InputIfFileExists{scalars/b2s.tikz}{}{\input{./figures/scalars/b2s.tikz}}
\endpgfgraphicnamed
 \end{equation}
\end{lemma}
\begin{proof}
This follows immediately from (B2$'$) via  \eqref{eq:dotdecom} and \eqref{eq:iv}.
\end{proof}

Since (IV) and (B2) have been derived from the rules listed in Figure \ref{figure3}, we will use them in the future for the derivation of other rules in Figure \ref{fig:ZX_rules}.

Swapping the colours in (B2) has the same effect as flipping it upside-down, so this variant follows immediately from the topology rule.

\begin{lemma}\label{s1prime}
The rule (S1$'$) can be derived:
\begin{equation}\label{eq:S1prime}
\beginpgfgraphicnamed{scalars/s1primeprf1}
\InputIfFileExists{scalars/s1primeprf1.tikz}{}{\input{./figures/scalars/s1primeprf1.tikz}}
\endpgfgraphicnamed=%
\beginpgfgraphicnamed{scalars/s1primeprf5}
\InputIfFileExists{scalars/s1primeprf5.tikz}{}{\input{./figures/scalars/s1primeprf5.tikz}}
\endpgfgraphicnamed
 \end{equation}
\end{lemma}
\begin{proof}
 \begin{align*}
\beginpgfgraphicnamed{scalars/s1primeproof}
\InputIfFileExists{scalars/s1primeproof.tikz}{}{\input{./figures/scalars/s1primeproof.tikz}}
\endpgfgraphicnamed
 \end{align*}
  where we used \eqref{eq:Hopf}, (S1), (S3$'$), the red spider rule, and (IV$'$).
\end{proof}

Now, to derive various rules involving non-trivial phases, we first show some equalities (up to scalars) between red states with phase $\pm\frac{\pi}{2}$ and green states with phases $\mp\frac{\pi}{2}$. These results are very similar to \eqref{eq:Y-state} in Section \ref{s:soundness_simplified}, but \eqref{eq:Y-state} was proved using the rules from Figure \ref{fig:ZX_rules}, whereas we are now using the rule set in Figure \ref{figure3}.

\begin{lemma}\label{piby2transformation}
The red state with phase $-\frac{\pi}{2}$ is equal to the green state with phase $\frac{\pi}{2}$ up to some scalars:
 \begin{equation}\label{piby2transform}
\beginpgfgraphicnamed{scalars/piby2trans}
\InputIfFileExists{scalars/piby2trans.tikz}{}{\input{./figures/scalars/piby2trans.tikz}}
\endpgfgraphicnamed
\end{equation}
\end{lemma}
\begin{proof}
We first use \eqref{eq:H_swap} and (EU$'$). Then:
\begin{align*}
\beginpgfgraphicnamed{scalars/piby2transprf1}
\begin{tikzpicture}
	\begin{pgfonlayer}{nodelayer}
		\node [style=none] (0) at (0.5, -0.5) {};
		\node [style=rn] (1) at (0.5, 0.25) {$\frac{\textnormal{-}\pi}{2}$};
	\end{pgfonlayer}
	\begin{pgfonlayer}{edgelayer}
		\draw (1) to (0.center);
	\end{pgfonlayer}
\end{tikzpicture}}
\endpgfgraphicnamed=%
\beginpgfgraphicnamed{scalars/piby2transprf2}
\begin{tikzpicture}
	\begin{pgfonlayer}{nodelayer}
		\node [style=none] (0) at (0.5, -0.5) {};
		\node [style={H box}] (1) at (0.5, 0) {};
		\node [style=gn] (2) at (0.5, 0.5) {$\nicefrac{\textnormal{-}\pi}{2}$};
	\end{pgfonlayer}
	\begin{pgfonlayer}{edgelayer}
		\draw (0.center) to (1);
		\draw (2) to (1);
	\end{pgfonlayer}
\end{tikzpicture}}
\endpgfgraphicnamed=%
\beginpgfgraphicnamed{scalars/piby2transprf3}
\InputIfFileExists{scalars/piby2transprf3.tikz}{}{\input{./figures/scalars/piby2transprf3.tikz}}
\endpgfgraphicnamed=%
\beginpgfgraphicnamed{scalars/piby2transprf4}
\InputIfFileExists{scalars/piby2transprf4.tikz}{}{\input{./figures/scalars/piby2transprf4.tikz}}
\endpgfgraphicnamed=%
\beginpgfgraphicnamed{scalars/piby2transprf5}
\InputIfFileExists{scalars/piby2transprf5.tikz}{}{\input{./figures/scalars/piby2transprf5.tikz}}
\endpgfgraphicnamed
\end{align*}
by (S1), (IV), and the colour-swapped version of (B1).
\end{proof}

Composing with the Hadamard node on both sides of \eqref{piby2transform}, we get:
\begin{equation}\label{piby2transcolour}
\beginpgfgraphicnamed{scalars/piby2transcolorch}
\InputIfFileExists{scalars/piby2transcolorch.tikz}{}{\input{./figures/scalars/piby2transcolorch.tikz}}
\endpgfgraphicnamed
\end{equation}
Furthermore, multiplying both sides of \eqref{piby2transform} by  the red-green scalar and using (IV), we find:
\begin{equation}\label{piby2transinver}
\beginpgfgraphicnamed{scalars/piby2transinverse}
\InputIfFileExists{scalars/piby2transinverse.tikz}{}{\input{./figures/scalars/piby2transinverse.tikz}}
\endpgfgraphicnamed
\end{equation}

\begin{lemma}\label{lem:piby2multip}
Applying a red co-copy map to two green states with phases $-\frac{\pi}{2}$ and $\frac{\pi}{2}$ yields the red state with zero phase:
 \begin{equation}\label{piby2multip}
\beginpgfgraphicnamed{scalars/piby2multiply}
\InputIfFileExists{scalars/piby2multiply.tikz}{}{\input{./figures/scalars/piby2multiply.tikz}}
\endpgfgraphicnamed
\end{equation}
\end{lemma}
\begin{proof}
Using (\ref{piby2transcolour}) and (\ref{piby2transinver}), we have:
\begin{align*}
\beginpgfgraphicnamed{scalars/piby2multiplyprf1}
\InputIfFileExists{scalars/piby2multiplyprf1.tikz}{}{\input{./figures/scalars/piby2multiplyprf1.tikz}}
\endpgfgraphicnamed=%
\beginpgfgraphicnamed{scalars/piby2multiplyprf2}
\InputIfFileExists{scalars/piby2multiplyprf2.tikz}{}{\input{./figures/scalars/piby2multiplyprf2.tikz}}
\endpgfgraphicnamed=%
\beginpgfgraphicnamed{scalars/piby2multiplyprf3}
\InputIfFileExists{scalars/piby2multiplyprf3.tikz}{}{\input{./figures/scalars/piby2multiplyprf3.tikz}}
\endpgfgraphicnamed=%
\beginpgfgraphicnamed{scalars/piby2multiplyprf4}
\begin{tikzpicture}
	\begin{pgfonlayer}{nodelayer}
		\node [style=rn] (0) at (0.5, 0.25) {};
		\node [style=none] (1) at (0.5, -0.5) {};
	\end{pgfonlayer}
	\begin{pgfonlayer}{edgelayer}
		\draw (0) to (1.center);
	\end{pgfonlayer}
\end{tikzpicture}}
\endpgfgraphicnamed
\end{align*}
where the last step is by (S1).
\end{proof}

This is the so-called \emph{supplementary property} for angles $\frac{\pi}{2}$ and $-\frac{\pi}{2}$.

\begin{corollary}\label{cor:piby2scalars}
Two green nodes with no inputs or outputs and phases $-\frac{\pi}{2}$ and $\frac{\pi}{2}$, respectively, are equal to two copies of the red-green scalar:
\begin{equation}\label{piby2saclarmultip}
\beginpgfgraphicnamed{scalars/piby2saclarmultiply}
\InputIfFileExists{scalars/piby2saclarmultiply.tikz}{}{\input{./figures/scalars/piby2saclarmultiply.tikz}}
\endpgfgraphicnamed
\end{equation}
\end{corollary}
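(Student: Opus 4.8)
The plan is to read \eqref{piby2saclarmultip} as the scalar-level shadow of the supplementary property established in Lemma~\ref{lem:piby2multip}: that lemma merges the two complementary green states through a red co-copy into a single phase-free red state, and the corollary should follow by feeding the same two green states into that merge and then closing up the diagram. Concretely, I would first use spider fusion (S1) to split each green scalar node into a green state (a $0\to 1$ spider with phase $\pm\frac\pi2$) capped by a green counit (the $1\to 0$ green spider), so that the left-hand side of \eqref{piby2saclarmultip} becomes two disconnected components, each of the form (green counit)$\circ$(green $\pm\frac\pi2$ state).

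Next I would route the two green states through a single red co-copy so as to match the hypothesis of Lemma~\ref{lem:piby2multip}. This is exactly the bialgebra compatibility between the green counit and the red multiplication: the two separate green counits are replaced by one green counit applied after a red co-copy, at the cost of a single copy of the red-green scalar. Diagrammatically this is the appropriate colour-swapped and upside-down instance of the bialgebra rule (B1)/(B2$'$), available by the remarks at the start of the appendix. Once this is done, the subdiagram (red co-copy)$\circ$(green $-\frac\pi2$ state $\otimes$ green $\frac\pi2$ state) is precisely the left-hand side of \eqref{piby2multip}, so Lemma~\ref{lem:piby2multip} rewrites it to a single red $0$-state. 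Finally, the remaining green counit sitting on top of that red $0$-state fuses (again by (S1)) into exactly one more red-green scalar, and together with the red-green scalar produced by the bialgebra step this yields the two copies of the red-green scalar on the right-hand side of \eqref{piby2saclarmultip}.

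The delicate point, and the step I expect to be the real obstacle, is the scalar bookkeeping rather than the topology. Individually the green scalars have value $1\mp i=\sqrt2\,e^{\mp i\pi/4}$, whereas the red-green scalar is $\sqrt2$; the spurious phases $e^{\pm i\pi/4}$ cancel only because the two complementary angles are taken together, which is exactly the content of supplementarity. The derivation must therefore be arranged so that Lemma~\ref{lem:piby2multip} is applied to the pair \emph{as a whole}, and so that the one red-green scalar introduced when merging the two counits combines cleanly with the one left behind after the merge; getting precisely these two factors to appear, and no others, is where care with (S1), (IV) and the exact form of the bialgebra rule is needed. A useful sanity check throughout is that both sides of \eqref{piby2saclarmultip} have standard interpretation $2$, so any stray factor of $\sqrt2$ signals a missing or superfluous scalar in the chain of rewrites.
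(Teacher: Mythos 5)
Your proposal is correct and follows essentially the same route as the paper's proof: split each green scalar with (S1), merge the two green counits through a single red multiplication via the colour-swapped, upside-down instance of (B1) (which produces one red-green scalar), then apply Lemma~\ref{lem:piby2multip} so that the remaining green counit sits on the red zero-phase state and yields the second red-green scalar. The only cosmetic slip is attributing that last identification to (S1) — the green counit composed with the red state \emph{is} the red-green scalar diagram by topology alone, no fusion rule needed — but this does not affect the argument.
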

\begin{proof}
Using (S1), each green node can be pulled apart into two nodes connected by an edge. Then:
\[
\beginpgfgraphicnamed{scalars/piby2saclarmultiplyprf}
\InputIfFileExists{scalars/piby2saclarmultiplyprf.tikz}{}{\input{./figures/scalars/piby2saclarmultiplyprf.tikz}}
\endpgfgraphicnamed,
\]
by (B1) and \eqref{piby2multip}.
\end{proof}

\begin{corollary}\label{cor:pibyb2swap}
The scalars in \eqref{piby2transform} and \eqref{piby2transcolour} can be brought to the other side:
\begin{equation}
\beginpgfgraphicnamed{scalars/piby2whitetored}
\InputIfFileExists{scalars/piby2whitetored.tikz}{}{\input{./figures/scalars/piby2whitetored.tikz}}
\endpgfgraphicnamed\label{piby2gntored}
\end{equation}
\begin{equation}
\beginpgfgraphicnamed{scalars/piby2redtowhite}
\InputIfFileExists{scalars/piby2redtowhite.tikz}{}{\input{./figures/scalars/piby2redtowhite.tikz}}
\endpgfgraphicnamed\label{piby2redtogn}
\end{equation}
\end{corollary}
\begin{proof}
To prove \eqref{piby2gntored}, start with the right-hand side of that equality, substitute with \eqref{piby2transform} and then use \eqref{piby2saclarmultip}:
\[
\beginpgfgraphicnamed{scalars/piby2whitetoredprf}
\InputIfFileExists{scalars/piby2whitetoredprf.tikz}{}{\input{./figures/scalars/piby2whitetoredprf.tikz}}
\endpgfgraphicnamed
\]
Then \eqref{piby2redtogn} can derived by applying Hadamard gates on both sides of \eqref{piby2gntored} and using (H) and \eqref{eq:H_swap}.
\end{proof}

\begin{lemma}\label{lem:innerprod}
The inner product between a green state of any phase and the red zero-phase effect is equal to the red-green scalar:
 \begin{equation}\label{angledelete}
\beginpgfgraphicnamed{scalars/angledelete}
\InputIfFileExists{scalars/angledelete.tikz}{}{\input{./figures/scalars/angledelete.tikz}}
\endpgfgraphicnamed
\end{equation}
\end{lemma}
\begin{proof}
We prove equality to the red-green scalar for each phase angle in turn. If the phase is $-\frac{\pi}{2}$, substitute for the green state using \eqref{piby2transcolour}. By (H), a red node with no outputs or inputs is equal to a green node with the same phase. The desired result then follows via \eqref{piby2saclarmultip} and (IV). For phase $\frac{\pi}{2}$ we start with \eqref{piby2gntored} and then proceed as before:
\begin{align}
\beginpgfgraphicnamed{scalars/angledeleteprf1}
\InputIfFileExists{scalars/angledeleteprf1.tikz}{}{\input{./figures/scalars/angledeleteprf1.tikz}}
\endpgfgraphicnamed \label{eq:angledelete-pi2}\\
\beginpgfgraphicnamed{scalars/angledeleteprf2}
\InputIfFileExists{scalars/angledeleteprf2.tikz}{}{\input{./figures/scalars/angledeleteprf2.tikz}}
\endpgfgraphicnamed
\end{align}
If the phase is $\pi$, we begin by splitting the green node using (S1) and applying \eqref{piby2transcolour}. We then apply (S1), followed by (B1). The final steps use (H), \eqref{piby2saclarmultip}, (IV), and \eqref{eq:angledelete-pi2}:
\begin{align*}
\beginpgfgraphicnamed{scalars/angledeleteprf3}
\InputIfFileExists{scalars/angledeleteprf3.tikz}{}{\input{./figures/scalars/angledeleteprf3.tikz}}
\endpgfgraphicnamed,
\end{align*}
This completes the proof.
\end{proof}

\begin{lemma}\label{lem:anglefreepi}
A green $\pi$ phase shift is equal, up to normalisation, to a loop with a Hadamard in it:
 \begin{equation}\label{anglefreepi}
\beginpgfgraphicnamed{scalars/anglefreepi}
\InputIfFileExists{scalars/anglefreepi.tikz}{}{\input{./figures/scalars/anglefreepi.tikz}}
\endpgfgraphicnamed
\end{equation}
\end{lemma}
\begin{proof}
Starting from the right-hand side, use (EU$'$) followed by (S1). Then:
\[
\beginpgfgraphicnamed{scalars/anglefreepiprf}
\InputIfFileExists{scalars/anglefreepiprf.tikz}{}{\input{./figures/scalars/anglefreepiprf.tikz}}
\endpgfgraphicnamed
\]
where the last steps use \eqref{eq:Hopf}, (IV) and \eqref{angledelete}.
\end{proof}

It follows immediately from \eqref{anglefreepi} via the green spider rule and (S3$'$) that a green $\pi$-phase state can also be written in phase-free form:
 \begin{equation}\label{anglefreepidot}
\beginpgfgraphicnamed{scalars/anglefreepidot}
\InputIfFileExists{scalars/anglefreepidot.tikz}{}{\input{./figures/scalars/anglefreepidot.tikz}}
\endpgfgraphicnamed
\end{equation}

\begin{lemma}\label{lem:pidotcopy}
The green $\pi$-phase state is copied by the red copy map, up to normalisation:
 \begin{equation}\label{pidotcopy}
\beginpgfgraphicnamed{scalars/pidotcopy}
\InputIfFileExists{scalars/pidotcopy.tikz}{}{\input{./figures/scalars/pidotcopy.tikz}}
\endpgfgraphicnamed
\end{equation}
\end{lemma}
\begin{proof}
Write the state in the phase-free representation derived in \eqref{anglefreepidot}. The subsequent steps variously use (B2$'$), \eqref{eq:Hadamard_self-inv}, (H) and its colour-swapped version, (S3$'$), as well as (S1) and its colour-swapped equivalent. In the step to the last row we use the Hopf law and in the next step, (H) and \eqref{eq:Hadamard_self-inv}.
\begin{center}
\beginpgfgraphicnamed{scalars/pidotcopyprf}
\InputIfFileExists{scalars/pidotcopyprf.tikz}{}{\input{./figures/scalars/pidotcopyprf.tikz}}
\endpgfgraphicnamed
\end{center}
Finally, the result follows via \eqref{pidotcopy}.
\end{proof}

\begin{lemma}\label{picopyrul}
(K1) can be derived:
\begin{equation}\label{eq:K1}
\beginpgfgraphicnamed{scalars/k1}
\InputIfFileExists{scalars/k1.tikz}{}{\input{./figures/scalars/k1.tikz}}
\endpgfgraphicnamed
\end{equation}
\end{lemma}
\begin{proof}
By the red spider law, it suffices to prove the following:
\begin{align*}
\beginpgfgraphicnamed{scalars/k1prf1}
\InputIfFileExists{scalars/k1prf1.tikz}{}{\input{./figures/scalars/k1prf1.tikz}}
\endpgfgraphicnamed
\end{align*}
Actually, we have:
\begin{align*}
\beginpgfgraphicnamed{scalars/k1prf2}
\InputIfFileExists{scalars/k1prf2.tikz}{}{\input{./figures/scalars/k1prf2.tikz}}
\endpgfgraphicnamed
 \end{align*}
using \eqref{anglefreepi}, (IV), (S1), (B1), (H), (B2$'$), and \eqref{pidotcopy}.
\end{proof}

\begin{lemma}\label{eulerdecomposition}
(EU) can be derived:
\begin{equation}\label{eq:EU}
\beginpgfgraphicnamed{scalars/HadaDecomSingles}
\InputIfFileExists{scalars/HadaDecomSingles.tikz}{}{\input{./figures/scalars/HadaDecomSingles.tikz}}
\endpgfgraphicnamed
\end{equation}
\end{lemma}
\begin{proof}
Start from the right-hand side. Then:
\begin{align*}
\beginpgfgraphicnamed{scalars/euleroldformprf}
\InputIfFileExists{scalars/euleroldformprf.tikz}{}{\input{./figures/scalars/euleroldformprf.tikz}}
\endpgfgraphicnamed
 \end{align*}
where we used the red spider rule, \eqref{piby2redtogn}, \eqref{piby2transform}, \eqref{piby2saclarmultip} and (IV).
\end{proof}

\begin{lemma}\label{k2rule}
(K2) can be derived:
\begin{equation}\label{eq:K2}
\beginpgfgraphicnamed{scalars/k2s}
\InputIfFileExists{scalars/k2s.tikz}{}{\input{./figures/scalars/k2s.tikz}}
\endpgfgraphicnamed
 \end{equation}
where $\alpha\in \{0, \frac{\pi}{2}, \pi, -\frac{\pi}{2}\}$.
\end{lemma}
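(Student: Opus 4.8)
The plan is to prove \eqref{eq:K2} separately for each admissible phase $\alpha\in\{0,\frac{\pi}{2},\pi,-\frac{\pi}{2}\}$, exploiting the fact that within the stabilizer fragment these are the only cases that can occur. I would first dispatch the two self-negating phases. For $\alpha=0$ the green node carries no phase, so (K2) asserts only that a red $\pi$ on an input leg is reproduced on every output leg, which is exactly the content of (K1) from Lemma~\ref{picopyrul} together with the spider rule (S1). For $\alpha=\pi$, since $-\pi=\pi\bmod 2\pi$ the phase is again unchanged, so the claim follows by combining (K1) with (S1) and the bialgebra rule \eqref{eq:B2} to distribute the red $\pi$ across the legs while re-absorbing the leftover $\pi$-phase via (S1). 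Neither of these cases involves a genuine sign flip, so both go through using only rules already derived earlier in the appendix.

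The substantive content lies in the cases $\alpha=\pm\frac{\pi}{2}$, where $-\alpha\neq\alpha$ and the red $\pi$ must genuinely negate the phase; here I would follow the sketch given after \eqref{eq:k2sketch}. The essential tools are the $\frac{\pi}{2}$-state identities of Lemma~\ref{piby2transformation} and Corollary~\ref{cor:pibyb2swap}, namely \eqref{piby2transform}, \eqref{piby2transcolour}, \eqref{piby2gntored} and \eqref{piby2redtogn}, which express a red $\mp\frac{\pi}{2}$ state as a green $\pm\frac{\pi}{2}$ state up to the red--green scalar. Concretely, for $\alpha=\frac{\pi}{2}$ I would first rewrite the green $\frac{\pi}{2}$ phase using these identities so that the offending phase is carried by red states, then commute the red $\pi$ through the now phase-free green spider using (K1) and the bialgebra rule, and finally re-encode the red states as a green $-\frac{\pi}{2}$ phase on the output legs. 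The case $\alpha=-\frac{\pi}{2}$ is handled symmetrically, interchanging the roles of $\frac{\pi}{2}$ and $-\frac{\pi}{2}$ in the same identities.

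The main obstacle is not the topological rearrangement but the exact bookkeeping of scalars, which is unavoidable because we work in the scalar-complete calculus rather than the original up-to-scalar one. Each application of \eqref{piby2transform}--\eqref{piby2redtogn} introduces a red--green scalar factor, and over the course of the $\pm\frac{\pi}{2}$ derivation several such factors accumulate on both sides. Reconciling them is precisely the role of Corollary~\ref{cor:piby2scalars} (equation \eqref{piby2saclarmultip}), which equates the pair of green $\pm\frac{\pi}{2}$ scalar nodes with two copies of the red--green scalar, together with the inverse rule (IV). I expect the delicate step to be checking that the scalars left over after commuting the red $\pi$ exactly match those produced when re-encoding the phase, so that no spurious factor survives: this is where the stabilizer-specific identities \eqref{piby2multip} and \eqref{piby2saclarmultip} do the real work, and it explains the remark following Theorem~\ref{thm:completeness_with_topology} that the derivation is valid only for phases that are multiples of $\frac{\pi}{2}$.
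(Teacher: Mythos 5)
Your plan for the central cases $\alpha=\pm\frac{\pi}{2}$ follows the paper's own sketch and is sound, but your treatment of $\alpha=\pi$ contains a genuine gap. You claim that because $-\pi=\pi \bmod 2\pi$ this case involves ``no genuine sign flip'' and follows from \eqref{eq:K1}, (S1) and \eqref{eq:B2}. That cannot work: although the phase $\pi$ is indeed its own negation, the red $\pi$ and green $\pi$ gates \emph{anticommute}, $Z_\pi X_\pi = -X_\pi Z_\pi$, so the scalar sub-diagram appearing in (K2) has interpretation $-1$ at $\alpha=\pi$, and this $-1$ is precisely the nontrivial content of the rule in that case. None of (K1), (S1), (B2) can conjure this scalar: a derivation showing that red $\pi$ and green $\pi$ commute on the nose, which is what your ``re-absorbing the leftover $\pi$-phase via (S1)'' would amount to, would contradict soundness of the calculus. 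The paper avoids this trap by \emph{not} arguing the $\pi$ case directly: it splits the $\pi$ phase into $\frac{\pi}{2}+\frac{\pi}{2}$ using the spider rule, applies the already-proven $\frac{\pi}{2}$ case twice, and then performs genuine scalar bookkeeping --- using \eqref{pidotcopy}, both spider rules and (IV) --- to show that the two accumulated $\frac{\pi}{2}$ scalars (each representing $i$ up to normalisation) equal the single $\alpha=\pi$ scalar. It obtains $\alpha=-\frac{\pi}{2}$ the same way, by splitting the phase into $\pi$ and $\frac{\pi}{2}$.

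A smaller but related slip is your $\alpha=0$ case: (K2) is a one-input, one-output commutation rule, so ``a red $\pi$ on an input leg is reproduced on every output leg'' describes (K1), not (K2). What actually remains to be checked at $\alpha=0$ is that the scalar sub-diagram of the rule reduces to the empty diagram, which the paper obtains from \eqref{angledelete} rather than from (K1) and (S1). On the positive side, your symmetric treatment of $\alpha=-\frac{\pi}{2}$ --- interchanging the roles of $\frac{\pi}{2}$ and $-\frac{\pi}{2}$ in \eqref{piby2transform}--\eqref{piby2redtogn} --- is a viable alternative to the paper's reduction, since the mirrored identities follow from the derivable colour-swap meta-rule; but as written your proof of the lemma remains incomplete until the $\alpha=\pi$ case is repaired, e.g.\ by adopting the paper's phase-splitting argument.
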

\begin{proof}
Using \eqref{angledelete}, it is easy to see that (K2) holds for $\alpha=0$.
For $\alpha=\frac{\pi}{2}$, we have
\begin{align*}
\beginpgfgraphicnamed{scalars/k2ruleprf}
\InputIfFileExists{scalars/k2ruleprf.tikz}{}{\input{./figures/scalars/k2ruleprf.tikz}}
\endpgfgraphicnamed
 \end{align*}
where the first step uses the red spider rule and \eqref{piby2transform}, the second step uses the $\pi$-copy rule \eqref{eq:K1} and the green spider rule, the third step uses \eqref{piby2redtogn} and \eqref{piby2transcolour}, and the last step the red spider rule, \eqref{piby2saclarmultip}, and (IV).

For $\alpha=\pi$, we split the red phase shift and use the above result twice:
\begin{align*}
\beginpgfgraphicnamed{scalars/k2ruleprf2}
\InputIfFileExists{scalars/k2ruleprf2.tikz}{}{\input{./figures/scalars/k2ruleprf2.tikz}}
\endpgfgraphicnamed
\end{align*}
Then, the scalars can be simplified using \eqref{pidotcopy}, the red and green spider rules, and (IV).

For $\alpha=-\frac{\pi}{2}$, we proceed similarly, splitting the red phase shift into $\pi$ and $\frac{\pi}{2}$ and applying both of the above results in sequence.
\begin{align*}
\beginpgfgraphicnamed{scalars/k2ruleprf3}
\InputIfFileExists{scalars/k2ruleprf3.tikz}{}{\input{./figures/scalars/k2ruleprf3.tikz}}
\endpgfgraphicnamed
 \end{align*}
The scalars can then be simplified as before, completing the proof.
\end{proof}

Note that this proves the $\pi$-commutation rule only for phase angles that are integer multiples of $\pi/2$. It is unknown whether or not (K2) is necessary when more general phase angles are allowed.

\begin{lemma}\label{zeororule}
(ZO) can be derived:
\begin{equation}\label{eq:ZO}
\beginpgfgraphicnamed{scalars/zo1}
\InputIfFileExists{scalars/zo1.tikz}{}{\input{./figures/scalars/zo1.tikz}}
\endpgfgraphicnamed
 \end{equation}
\end{lemma}
\begin{proof}
From the left-hand side of the above, first apply (IV). Then:
\begin{align*}
\beginpgfgraphicnamed{scalars/zeroruleproof}
\InputIfFileExists{scalars/zeroruleproof.tikz}{}{\input{./figures/scalars/zeroruleproof.tikz}}
\endpgfgraphicnamed.
\end{align*}
where the second step uses (ZO$'$), the third step uses the fact that a green node is equal to two copies of the red-green scalar, followed by (IV), (S2$'$) and the red spider rule. The fourth step again uses (ZO$'$). The fifth equality holds by the copy rule. Finally, (ZO$'$) is applied again to complete the proof.
\end{proof}

\end{document}